\newtheorem{theorem}{Result}[section]
\title{Principal nested spheres for high-dimensional data}
\author{Mymuna Monem$^{1,3}$, Ian L.~Dryden$^2$ and Florence George$^1$\\ $\;$ \\
$^1$Department of Mathematics and Statistics, Florida International University\\
$^2$School of Mathematical Sciences, University of Nottingham\\
$^3$Baptist Health South Florida, Coral Gables, Florida.
}
\date{ }
\begin{document}
\maketitle

\onehalfspacing

\begin{abstract}
The method of Principal Nested Spheres (PNS) is a non-linear dimension reduction technique for spherical data. The method is a backwards fitting procedure, starting with fitting a high-dimensional sphere and then successively reducing dimension at each stage. After reviewing the PNS method in detail, we introduce some new methods for model selection at each stage between great and small subspheres, based on 
the Kolmogorov-Smirnov test, a variance test and a likelihood ratio test. The current PNS fitting method is slow for high-dimensional spherical data, and so we introduce a fast PNS method which involves an initial principal components analysis decomposition to select a basis for lower dimensional PNS. A new visual method called the PNS biplot is introduced for examining the effects of the original variables on the PNS, and this involves procedures 
for back-fitting from the PNS scores back to the original variables. 
The methodology is illustrated with two high-dimensional datasets from cancer research: Melanoma proteomics data with 500 variables and 205 patients, and a Pan Cancer dataset with 12,478 genes and 300 patients. In both applications the PNS biplot is used to select variables for effective classification.  
\end{abstract}

{\bf Keywords:} Biplot, Cylinder, Dimension reduction, Great sphere, Non-linear, Principal components analysis, Proteomics, RNA-seq, Scores, Small sphere, subspheres.

\section{Introduction}

Principal Nested Spheres (PNS)  is a 
non-linear dimension reduction method for 
spherical data, and was introduced by \citet{Jungetal12}. The method involves successively fitting a sequence of nested subspheres to a dataset on a sphere, where one aims to retain as much information as possible at each stage. The method is carried out in a backwards manner, fitting subspheres at the highest dimension and then conditional on that fit successively fitting the next 
lower dimensional sphere, and so on. This is analogous to backwards variable selection in regression, where one fits a full model and then successively removes predictors at each stage \citep{Marronetal10}.

PNS has been used in several contexts since \citet{Jungetal12}
introduced the work, with initial applications to spheres and planar landmark shapes. \citet{Drydenetal19} extended the method to 3D shape spaces, with applications to peptide shape analysis; \citet{pennec2018barycentric} proposed the nested approach of barycentric subspace analysis on manifolds; 
\citet{Zoubouloglou2023_STPCA} developed methodology for the analysis of data on the torus, using a mapping to the sphere and then applying PNS; and 
\citet{YangVemuri2020_NestedGrassmannians} consider a similar idea to PNS with nested Grassmanian manifolds.

In this paper we introduce several extensions of the PNS method, including fast PNS for high-dimensional spheres, the PNS biplot for interpretation,  and new approaches to model choice. 
We apply the methodology to a study of Melanoma proteomics data \citep{Browetal10}, where it is of interest to distinguish between Stage 1 and Stage 4 cancer, and 
we analyse some Pan Cancer RNA-seq data \citep{Hoadley2014TCGA-PanCan}, where it is of interest to compare six different types of cancer.

\section{Principal nested spheres}
In this section we review the method of PNS in detail, in particular focusing on the construction, parameterization, PNS scores and estimation. 
\subsection{PNS construction}
Consider the sphere $S^d$, the 
$d$-dimensional sphere in $(d+1)$ real dimensions $\mathbb{R}^{d+1}$. The method of principal nested spheres \citep{Jungetal12} involves finding a sequence of nested spheres.
At the first level of the sequence we  find a {subsphere} $A_{d-1}$ of $S^d$, defined by an orthogonal axis $v_1\in S^d$ and an angle $r_1\in(0,\pi/2]$, given by 
$$A_{d-1}(v_1,r_1)=\{x \in S^d:\rho_d(v_1,x)=r_1\},$$
where $\rho_d(x,y)$ is the spherical distance between any two points $x$ and $y$ on $S^d$ which is also given by $\rho_d(x,y)=\cos^{-1} (x^Ty)$. If the angle is $r=\pi/2$ then $A_{d-1}$ is a { great subsphere}, and if $r < \pi/2$ then $A_{d-1}$ is a { small subsphere}. 

The subsphere $A_{d-1}(v_1,r_1)$ is then translated, rotated and rescaled 
to a unit sphere $S^{d-1}$ by the 1-1 and onto transformation $S^{d-1} = f_1(A_{d-1})$, with  $f_1:\mathbb{R}^d\rightarrow \mathbb{R}^d$. The inverse transformation is $A_{d-1} = f_1^{-1}(S^{d-1})$. 

At the next stage, we find a subsphere on $S^{d-1}=f_1(A_{d-1})$ which is denoted by $A_{d-2}(v_2,r_2)$. This is then translated, rotated and rescaled to $S^{d-2}$ using the transformation \begin{align*}
S^{d-2} &= f_2(  A_{d-2} ) = f_2 \circ  f_1(A_{d-1}).
\end{align*}  
So the construction of subspheres proceeds in a backward fashion $A_{d-1},A_{d-2},\ldots,A_1$ with each step transforming to the sphere $S^{d-1},\ldots,S^2,S^1$. To write mathematically,
\begin{align*}
S^{d-1} &= f_1(A_{d-1}) \\
S^{d-2} &= f_2(A_{d-2}) = f_2 \circ f_1(A_{d-1}) \\
&\vdots \\
S^1 &= f_{d-1} \circ f_{d-2} \circ \ldots \circ f_2 \circ f_1(A_{d-1})
\end{align*}
In the final step on the sphere $S^1$, we find a 0-dimensional subsphere $A_0$ which is a point on $S^1$ and is written as $A_0 \equiv v_d$. 

In order to calculate the { nested subspheres}  $\zeta_i$ of the original sphere we need to apply recursively the inverse transformation $f_{k}^{-1}$. The resulting decomposition of spheres sequentially gives $k$-dimensional nested subspheres denoted by $\zeta_k$ of the data for each $k=0,1,2,...,d-1$. This $\zeta_k$ is a sequence of nested subspheres of $S^d$ can be expressed as
\begin{equation}\label{rel}   \zeta_0\subset\zeta_1\subset\zeta_2\subset...\subset\zeta_{d-1}\subset\ S^d.
\end{equation}

The top nested subsphere is $\zeta_{d-1} = A_{d-1}$,
then the remainder of the nested subsphere sequence is
\begin{align*}
\zeta_{d-2} &= f_1^{-1}(A_{d-2}) \\
\zeta_{d-3} &= f_1^{-1} \circ f_2^{-1}(A_{d-3}) \\
&\vdots \\
\zeta_{1} &= f_1^{-1} \circ f_2^{-1} \circ \ldots \circ f_{d-2}^{-1}(A_{1}) \\
\zeta_{0} &= f_1^{-1} \circ f_2^{-1} \circ \ldots \circ f_{d-1}^{-1}(A_{0})
\end{align*}

The final nested sphere $\zeta_0$ is actually a point on $S^{d}$, and this will be referred to as the { PNS mean}. 

To illustrate the nested subspheres and transformations we refer to Figure \ref{subsphere}, which follows from \citet{Jungetal12}.

\begin{figure}[H]
         \centering
\begin{tikzpicture}[scale=0.9, line join=round, line cap=round]

\shade[ball color=gray!20, opacity=0.9] (0,0) circle (2);
\draw[thick] (0,0) circle (2);
\node at (0,-2.6) {$S^d \subset \mathbb{R}^{d+1}$};

\draw[thick, opacity=0.6] (-2,0) arc[start angle=180,end angle=360,x radius=2,y radius=0.5];
\draw[dashed, opacity=0.3] (2,0) arc[start angle=0,end angle=180,x radius=2,y radius=0.5];

\draw[very thick, red, dashed] (0,1.2) ellipse[x radius=1.55, y radius=0.4];
\node[red!70!black] at (2.8,1.3) {$A_{d-1}(v_1, r_1)$};

\draw[->, thick] (0,0) -- (0,2.3);
\node at (0.3,2.3) {$v_1$};

\draw[dashed, thick] (0,0) -- (0,1.2);
\node at (0.25,0.45) {$r_1$};

  \draw[-, thick] (0,0) -- (1.6,1.2);

\draw[thick, blue, -, >=Stealth] (0,0.3) arc[start angle=90, end angle=40, radius=0.3];

\draw[->, thick] (2.2,1.45) .. controls (3.5,2.2) .. (4.9,1.6) node[midway, above] {$f_1$};

\begin{scope}[shift={(6,0)}]
  \shade[ball color=gray!20, opacity=0.9] (0,0) circle (1.6);
  \draw[thick] (0,0) circle (1.6);
  \node at (0,-2.1) {$S^{d-1} \subset \mathbb{R}^{d}$};

  \draw[thick, opacity=0.6] (-1.6,0) arc[start angle=180,end angle=360,x radius=1.6,y radius=0.4];
  \draw[dashed, opacity=0.3] (1.6,0) arc[start angle=0,end angle=180,x radius=1.6,y radius=0.4];

  \draw[very thick, red, dashed] (0,0.9) ellipse[x radius=1.26, y radius=0.3];
  \node[red!70!black] at (2.45,1.0) {$A_{d-2}(v_2, r_2)$};

  \draw[->, thick] (0,0) -- (0,1.9);
  \node at (0.3,1.9) {$v_2$};

  \draw[dashed, thick] (0,0) -- (0,0.9);
  \node at (0.25,0.4) {$r_2$};

  \draw[-, thick] (0,0) -- (1.3,0.9);

  \draw[thick, blue, -, >=Stealth] (0,0.3) arc[start angle=90, end angle=40, radius=0.3];

\end{scope}

\draw[->, thick] (7.8,1.2) .. controls (8.9,1.8) .. (10.5,1.25) node[midway, above] {$f_2$};

\begin{scope}[shift={(11.5,0)}]
  \shade[ball color=gray!20, opacity=0.9] (0,0) circle (1.3);
  \draw[thick] (0,0) circle (1.3);
  \node at (0,-1.8) {$S^{d-2} \subset \mathbb{R}^{d-1}$};

  \draw[thick, opacity=0.6] (-1.3,0) arc[start angle=180,end angle=360,x radius=1.3,y radius=0.35];
  \draw[dashed, opacity=0.3] (1.3,0) arc[start angle=0,end angle=180,x radius=1.3,y radius=0.35];

  \draw[very thick, red, dashed] (0,0.8) ellipse[x radius=1, y radius=0.25];
  \node[red!70!black] at (2.2,0.9) {$A_{d-3}(v_3, r_3)$};

  \draw[->, thick] (0,0) -- (0,1.6);
  \node at (0.3,1.6) {$v_3$};

  \draw[dashed, thick] (0,0) -- (0,0.8);
  \node at (0.25,0.4) {$r_3$};

  \draw[-, thick] (0,0) -- (1.0,0.8);

  \draw[thick, blue, -, >=Stealth] (0,0.25) arc[start angle=90, end angle=40, radius=0.25];
\end{scope}

\draw[->, thick] (13.2,1.2) .. controls (13.9,1.8) .. (15.2,1.3) node[midway, above] {$f_3$};

\end{tikzpicture}
   \caption{The subsphere $A_{d-1}(v_1,r_1)$ of $S^d$ (left) and its transformation by $f_1$ to give $S^{d-1}$ (middle). The subsequent sequence relation gives subsphere $A_{d-2}(v_2,r_2)$ (middle) and its transformation by $f_2$ to $S^{d-2}$ (right). Then subsphere $A_{d-3}(v_3,r_3)$ is obtained and
   transformed by $f_3$ and the sequence continues further until $A_0$ is obtained on $S^1$. Each sphere is one dimension lower than the previous sphere. }
     \label{subsphere}
\end{figure}
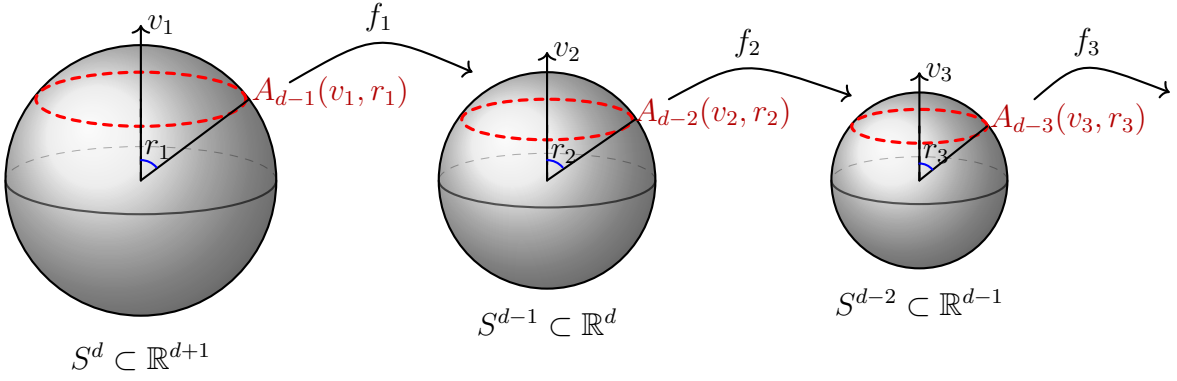

\subsection{The PNS parameter space}
The PNS parameter space consists of the parameters $\{ (v_k,r_k), k=1,\ldots,d-1, \; \; {\rm and} \; \;  v_d  \}$, where $v_k$ is an orthogonal axis $(d+2-k)$-vector, and $r_k$ is the angle between $v_k$ and the intersection between each sphere and subsphere, that is $r_k \in (0,\frac{\pi}{2}], \; k=1,\ldots,d-1$.

\begin{theorem} The dimension of the PNS parameter space is 
\begin{equation}\label{D}
  D = d(d+3)/2 - 1. 
\end{equation}
\end{theorem}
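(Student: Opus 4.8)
The plan is to treat the PNS parameter space as a product of simple manifolds and count dimensions factor by factor, using the fact that the dimension of a product equals the sum of the dimensions of its factors. The parameters split naturally into three groups: the axes $v_1,\ldots,v_{d-1}$, the angles $r_1,\ldots,r_{d-1}$, and the terminal point $v_d$. I would compute the contribution of each group separately and then add.

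First I would pin down the dimension contributed by each axis $v_k$. At stage $k$ the construction takes place on the sphere $S^{d+1-k}$, so $v_k$ is a unit vector in $\mathbb{R}^{d+2-k}$; the unit-norm constraint removes one degree of freedom, leaving $v_k$ as a point on $S^{d+1-k}$, which has dimension $d+1-k$. The same reasoning covers the terminal point, since $A_0 \equiv v_d$ lies on $S^1$ and so has dimension $1$, consistent with the expression $d+1-k$ evaluated at $k=d$. Hence the axes $v_1,\ldots,v_d$ together contribute $\sum_{k=1}^{d}(d+1-k) = d(d+1)/2$. Each angle $r_k \in (0,\pi/2]$ is a single scalar coordinate and contributes one dimension, so $r_1,\ldots,r_{d-1}$ contribute $d-1$ in total.

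Adding the two groups then gives
\begin{align*}
D = \frac{d(d+1)}{2} + (d-1) = \frac{d^2+3d-2}{2} = \frac{d(d+3)}{2} - 1,
\end{align*}
which is the claimed value.

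The point requiring the most care is the justification that these factors are genuinely independent, so that the dimensions simply add. The subtlety is that each axis $v_{k+1}$ is defined not on the original sphere but on the freshly transformed sphere $S^{d-k}$ obtained after applying the maps $f_1,\ldots,f_k$; because these maps are 1-1 and onto, each $v_k$ ranges freely over its own sphere independently of the earlier choices, and there are no cross-stage orthogonality constraints linking the different $v_k$. Establishing this independence — and thereby confirming that the parameter space is a clean product of the spheres $S^{d+1-k}$ with the intervals carrying the $r_k$ — is the step I would treat most carefully; once it is in place the dimension count above is immediate.
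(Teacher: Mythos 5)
Your proposal is correct and follows essentially the same argument as the paper: each axis $v_k$ contributes $d+1-k$ dimensions because it is a unit vector on $S^{d+1-k}$, summing to $d(d+1)/2$, and the $d-1$ angles contribute one dimension each, giving $d(d+3)/2-1$. Your additional discussion of why the contributions add independently is a sensible refinement but does not change the route.
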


\begin{proof}
As each $v_k$ is on a sphere, we have $\|v_k\| = 1$, and so the dimension of each $v_k$ is $d+1-k$.  So, the number of dimensions of the PNS parameter space is 
$d + (d-1) + \ldots + 2 + 1$ for the vectors $v_k, k=1,\ldots,d$, plus $d-1$ for the radii/angles. Hence, the total number of parameters is $d(d+1)/2 + (d-1) = d(d+3)/2 - 1$. 
\end{proof}

Let us write $\Phi$ for the set of PNS parameters, where 
\begin{equation} \Phi = \{ v_1,\ldots,v_d,r_1,\ldots,r_{d-1} \} .
\label{pars}
\end{equation}
It will also be of interest to consider the 
PNS mean $\zeta_0 = (\phi_1,\ldots,\phi_{d+1})^T \in S^d$ as a parameter vector of interest. The PNS mean is uniquely determined by the PNS parameters $\Phi$ using the transformation 
$\zeta_{0} = f_1^{-1} \circ f_2^{-1} \circ \ldots \circ f_{d-1}^{-1}(A_{0})$. 

\subsubsection*{Special cases}
\begin{enumerate}
    \item Let $d=2$. 
    The parameters are $v_1 \in S^2$ (2 dimensions), $v_2 \in S^1$ (1 dimension) 
    and $r_1$ (1 dimension). So the total number of free parameters is $4 = 2+1+1$
    which is equal to $D= d(d+3)/2-1 = 2\cdot 5/2 -1 = 4$.

    \item Let $d=3$.
    The parameters are 
    $v_1 \in S^3$ (3 dimensions), $v_2 \in S^2$ (2 dimension), $v_3 \in S^1$ (1 dimension) 
    and $r_1, r_2$ (2 dimensions). So the total number of free parameters is $8 = 3+2+1+2$ 
    which is equal to $D= d(d+3)/2-1 = 3\cdot 6/2 -1 = 8$.
\end{enumerate}

\subsection{Residuals and PNS scores}
\subsubsection{PNS scores for a single new observation}
Let us first consider one single new observation $x$ on the sphere $S^d$ and we describe the calculation of the residuals and PNS scores given the values of the PNS parameters $\Phi$.
\begin{enumerate}
\item[Step 1:]
 As per our above discussion, the point $x$ will be projected onto one less dimensional subsphere i.e $A_{d-1}$ so we can write  $x^{(d)*} = P ( x^{(d)} ) \in A_{d-1}(v_1,r_1) \in S^{d}$, where $P$ is the projection function which finds the closest point $x^{(d)*}$ on the subsphere $A_{d-1}$, and we can express this by 
$$x^{(d)*} = P( x^{(d)} )  = {\rm arg} \inf_y  \rho_d( x^{(d)} , y ) . $$ 

To find the PNS score we need to calculate the residual $\xi_d$ of $x$ from a subsphere $A_{d-1}(v_1,r_1)$ of $S^d$ which is defined as the signed length of the minimal geodesic distance between the subsphere to the data point $x$ expressed by $\xi_d =\rho_d(x,v_1)-r_1= \cos^{-1}( v_1^Tx^{(d)} ) - r_1. $
Note that  $ | \xi_{d} | =  \rho_d(x^{(d)} , x^{(d)*} )$, which is the distance between the original point $x^{(d)}$ and projected point $x^{(d)*}$.  The signed residual $\xi_{d}$ will be used to calculate a PNS score. The sign of $\xi_d$ depends on whether $x$ is in the interior or exterior of the geodesic subsphere $A_{d-1}$, then it will be negative or positive respectively. 

Now after finding the projected point $x^{(d)*}$, it is then transformed to a point on a new unit sphere $x^{(d-1)} = f_1( x^{(d)*} ) \in S^{d-1}$ using the rotation and rescaling function $f_1()$. 

\item[Step 2:]
Next this point $x^{(d-1)}$  is projected to one lower dimensional subsphere $x^{(d-1)*} = P ( x^{(d-1)} ) \in A_{d-2}(v_2,r_2) \in S^{d-1}$.  The signed residual will be $$ \xi_{d-1}  = \cos^{-1}( v_2^Tx^{(d-1)} ) - r_2 , $$ 
and we have $|\xi_{d-1}| =  \rho_{d-1}(x^{(d-1)} , x^{(d-1)*} )$. The projected point $x^{(d-1)*}$ is then transformed to a point on a new unit sphere $x^{(d-2)} = f_2( x^{(d-1)*} ) \in S^{d-2}$ using the rotation and rescaling function $f_2()$. 

\item[Step $k+1$:]
This process continues and so at an intermediate step $k+1$ the projected point is transformed to $x^{(d-k)} = f_{k}( x^{(d-k+1)*} ) \in S^{d-k}$. 
In turn this point is projected to $x^{(d-k)*} = P ( x^{(d-k)} ) \in A_{d-k-1}(v_{k+1},r_{k+1})$.  The signed residual 
is 
\begin{equation} \xi_{d-k}  = \cos^{-1}( v_{k+1}^Tx^{(d-k)} ) - r_{k+1} . 
\label{PNSresiduals0} \end{equation}
The projected point $x^{(d-k)*}$ is then transformed to a point on a new unit sphere $x^{(d-k-1)} = f_{k+1}( x^{(d-k)*} ) \in S^{d-k-1}$ using the rotation and rescaling function $f_{k+1}()$. 

\item[Step $d$:]
In the final step $d$, the projected point is transformed to $x^{(1)} = f_{d-1}( x^{(2)*} ) \in S^{1}$. The parameter at this final level is $v_{d} \in S^1$ which is an angle, and the residual 
distance $| \xi_1 | = \rho_{1}(x^{(1)}, v_d )$, and 
the sign of the score is determined by whether the point is clockwise or anti-clockwise from $v_d$. Note   $\xi_{1} \in (-\pi,\pi]$.  
\end{enumerate}

Now that the residuals have been obtained,
the full set of PNS scores is then computed by multiplying the $d$-vector of residuals $(\xi_1, \ldots, \xi_{d})$ by functions of the radii to rescale 
the scores to have the correct scale 
of the original high-dimensional sphere $S^d$. 

In particular, the PNS scores are 
$$s_{k} =  a_k  \xi_{k}  \; , \qquad \;  a_k = \prod_{i=1}^{d-k}  \sin(r_i) \;  , \qquad k=1,\ldots,d-1 . 
$$
$s_d = \xi_{d}$, and $a_k$ is the radius of the 
subsphere $\zeta_k$, $k=1,\ldots,k$. We call $s_1,\ldots,s_d$ the {PNS1, PNS2, \ldots, PNS$d$ scores}, respectively. 
To find the range of the PNS scores, we have $s_1 \in  S^1(a_1)$, 
a circle of radius $a_1$. Hence, PNS1 is a circular variable.
The other PNS scores are non-circular 
variables on an interval of $\mathbb{R}$. For great subspheres
the interval is $(-a_k \pi/2, a_k \pi/2)$, and for small subspheres the interval is determined by the maximum distances 
in the interior and exterior of each subsphere.
So, the PNS score space is analogous to 
a type of cylinder, with PNS1 being a circular variable, and PNS2 to 
PNS$d$ being defined on intervals. The PNS score space is a subset of the 
product space  $S^1(a_1) \times \mathbb{R}^{d-1}$. The score $s_1$ comes from the final step (low dimensional) sphere and the subsequent  scores $s_2, s_3\ldots,s_d$ come from successively 
higher dimensional spheres, and we write 
$ \mathbf{s} = (s_1,\ldots,s_d)^T$.
The nonlinear function that maps from the original spherical data $x^{(d)} \in S^d$ to the PNS scores for a particular set of parameters $\Phi$ is written as
\begin{equation}
    \mathbf{s} = h( x^{(d)} | \Phi) , \; \quad h: S^d \to S^1(a_1) \times {\mathbb{R}}^{d-1} \label{gfun}
    \end{equation}
which is a 1-1 and onto map with inverse $x^{(d)} = h^{-1}(\mathbf{s}| \Phi)$. 

\subsubsection{Population model}
A population model for PNS can be constructed from either the 
residuals or the scores given the 
population parameters $\Phi$. For example, the 
residual $\xi_k$ could have truncated Gaussian distributions with mean zero and positive variance $\sigma_k^2, 
k=1,\ldots,d$. Other symmetric distributions 
could also be used, e.g. the Laplace distribution
which is more accommodating of outliers. 

\subsubsection{PNS scores for a sample}
If we have a new sample size $n$ of observations $x_i^{(d)}, i=1,\ldots,n$ on $S^d$ we would like to compute the PNS scores given the PNS parameters. 
We denote the PNS scores for the $i$th observation as the $d$-vector 
\begin{equation} (s_{1,i}, s_{2,i}, \ldots, s_{d-1,i}, s_{d,i})^T, \; i=1,\ldots,n. \label{PNSscores} \end{equation} 
The corresponding residuals are
written as
\begin{equation}
(\xi_{1,i}, \xi_{2,i}, \ldots, \xi_{d-1,i}, \xi_{d,i})^T, \; i=1,\ldots,n.
\label{PNSresiduals}
\end{equation}

\subsection{Least squares estimation}\label{sec:LS}
The main application of PNS is to fit 
a sequence of nested subspheres to 
a dataset of $n$ points on $S^d$. At each stage we wish to 
fit a nested subsphere which fits as close as
possible to the data in the current 
sphere. A convenient method is to use least squares
in order to minimize the sum of squared residuals 
from the data at the higher level to the projected data on the fitted subsphere at the next (lower) level. As we are
finding the best fit at each level the term `principal' nested spheres is appropriate. 

Consider the 
data $x^{(d)}_i, i=1,\ldots,n$ on the highest dimensional sphere $S^d$. 
At the first stage we find the best fitting subsphere $\hat{A}_{d-1}$ by minimizing the sum of squares of the residuals of $x^{(d)}_i$ that is
\begin{equation}
    \sum_{i=1}^{n} \xi_{d,i}(v_{1},r_{1})^2=\sum_{i=1}^{n} \{\rho_d(x^{(d)}_i,v_{1})-r_{1}\}^2,
\end{equation}
where $v_1\in S^d$ and $r_1\in(0,\pi/2]$ and the residuals are
given in (\ref{PNSresiduals}) using (\ref{PNSresiduals0}). Using the least squares estimation method we can find the best estimate for $\hat v_1$ and $\hat r_1$. For example, minimization can be carried out by the R function {\tt optim} \citep{R-package}.

Then in the next stage, conditional on 
$\hat v_1, \hat r_1$, we project and transform 
the data to the sphere of one fewer dimension, i.e. to $x^{(d-1)}_i \in S^{d-1}, i=1,\ldots,n$.
We then find the best-fitting subsphere $\hat{A}_{d-2} \in S^{d-1}$  by minimizing the sum of squares of the residuals of the projected data in $S^{d-2}$ that is
\begin{equation}
    \sum_{i=1}^{n} \xi_{d-1,i}(v_{2},r_{2})^2=\sum_{i=1}^{n} \{\rho_{d-1}(x^{(d-1)}_i,v_{2})-r_{2}\}^2.
\end{equation}

This sequence continues until the final 
stage where $\hat v_d$ is chosen as the Fr\'echet mean
of the angles $x^{(1)}_i \in S^1, i=1,\ldots,n$, i.e. 
$$ \hat v_d = {\rm arg} \inf_{v} \rho_1( x_i^{(1)},v)^2 . $$

In order to denote the set of 
fitted parameters from a PNS fit to 
a dataset we write  $\hat \Phi$, where the parameters $\Phi$ were defined in 
Equation (\ref{pars}).

An implementation of code for fitting principal nested spheres is in the function {\tt pns} in the {\tt shapes} package in R \citep{DrydenGithub}. 
This code is closely based on the original 
Matlab code of Sungkyu Jung \citep{Jungetal12}.

We can also regard the method of fitting as a sequential conditional maximum likelihood (ML) procedure using Gaussian errors, which can be helpful for distributional 
results and investigating properties. 
At each stage the parameters $(v_k, r_k)$ are obtained using ML estimation conditional on the previous stage estimates also obtained using ML estimation.

\section{Properties and extensions}
\subsection{Consistency}
Consider a population PNS model with true parameters $\Phi$ which are identifiable, for example residual errors with mean zero at each stage and strictly positive variance. 
The usual PNS fitting procedure in Section \ref{sec:LS}
corresponds to using independent 
Gaussian errors. Since the estimation procedure is 
an M-estimation problem at each stage, then using standard assumptions on the residual errors we have consistency of the PNS estimates $ \hat \Phi \overset{p}\to \Phi$
 as $n \to \infty$. In addition under mild assumptions 
 $$ \sqrt{n}( \hat  \Phi - \Phi) \overset{d}{\to} N(0, \Sigma) . $$ Note that the estimation variance is propagated down each of the levels, and so uncertainty in the estimates can be obtained using a bootstrap procedure.

\subsection{Model choice: Great versus small subspheres}
An important practical consideration with PNS is choosing a method to decide between fitting a great subsphere versus a small subsphere at each level. The small subsphere will always fit better with a lower error sum of squares as there is an extra angle parameter, but is it a significantly better model? 
We can consider the hypothesis test at each stage between
\begin{center}
    \textit{$H_{0}$:} The residual distribution is the same under the great and small subsphere fits.  
\end{center}
vs
\begin{center}
    \textit{$H_{1}$:} The residual distribution is different under the great and small subsphere fits.  
\end{center}

Some methods for fitting are 
available in the {\tt pns} function of the 
{\tt shapes} library \citep{DrydenGithub} 
with option {\tt sphere.type=} given below in parentheses. \citet{Jungetal12} consider a sequential bootstrap test
({\tt seq.test}) and the Bayesian Information Criterion (BIC) for Gaussian residuals as possible methods for choosing between the great and small subsphere models at each stage of the fitting procedure. 
Other options are to choose small subspheres at each level ({\tt small}), great subspheres at each level ({\tt great}), or great subspheres for initial stages, and then small subspheres ({\tt bi-sphere}). 
We introduce some additional methods for model choice at each stage.

\subsubsection{Kolmogorov-Smirnov test}\label{KStest}
The two-sample Kolmogorov-Smirnov (KS) test \citep[e.g.,][page 309--314] {Conover1971} can be used to test the equality of the 
distribution of the absolute value of the residuals from fitting the great subsphere or small subsphere at each stage (see {\tt ks.test} in R).
The KS test statistic here is defined as
$$D_n = \sup_{x \in \mathbb{R}^+} \left| \hat F_{great}(x) - \hat F_{small}(x) \right|,$$
and we are testing if $H_0: {F}_{great}={F}_{small}$, where $\hat F_{great}, \hat F_{small}$ are the empirical cumulative distribution functions of the absolute value of the residuals under the great subsphere and small subsphere fits at a particular stage. 
The KS test is expected to not be as powerful as the likelihood ratio tests in Section \ref{LR} and so we may be less likely to reject the great subsphere model at each stage. A bias towards great subsphere fits may actually be desirable in many applications, providing a form of regularization that favors the great subsphere models.

\subsubsection{Variance Test}\label{vartest}
The variance test performs an F test to compare the variances of the residuals 
under the great and small subsphere fits. The test statistic is calculated as:
$F = \frac{s_1^2}{s_2^2}$
where $s_1^2$ and $s_2^2$ are the sample variances of the residuals from 
the great and small model fits, and we reject $H_0$ of equal variances if $F > F_{n_1-1,n_2-1,1-\alpha}$ where $F_{\nu_1,\nu_2,1-\alpha}$ is the $1-\alpha$ quantile of the $F_{\nu_1,\nu_2}$ distribution with $\nu_1,\nu_2$ degrees of freedom.
For this method the great subsphere fit is deemed appropriate if the equality of the population variances of the residuals cannot be rejected. Otherwise, we choose the small subsphere fit. 

\subsubsection{Likelihood ratio tests}\label{LR}
We can carry out the test between $H_0$ and $H_1$ using 
likelihood ratio tests from appropriate parametric distributions. In particular, we use the normal 
distribution for the residuals here. We compute the usual log-likelihood ratio $\Lambda$ and reject $H_0$ if $-2 \log \Lambda > \chi^2_{1,1-\alpha}$, where $\chi^2_{\nu,1-\alpha}$ is the $1-\alpha$ quantile of the chi-squared distribution with 
$\nu$ degrees of freedom. Other choices can of course be considered, for example the Laplace distribution, or the exponential or Weibull distribution for the absolute value of the residuals.

\subsection{Fast approximation of high-dimensional PNS using PCA}\label{fastpns}
Following the approximate method introduced by \citet{Drydenetal19} for nested shape spaces, we consider a fast approximation for high-dimensional PNS by first carrying out principal components analysis (PCA) to construct an approximating lower dimensional sphere. Recall that PCA is a very widely used technique for dimension reduction, where an eigen-decomposition of the covariance matrix leads to the PC 
loading vectors given by the eigenvectors, with the variance of each PC given by the eigenvalues  \citep[e.g., see][]{Jolliffe02}. 
The details of the fast PNS method that we introduce are similar to \citet{Drydenetal19},  but there is no Procrustes matching stage here.


Consider $X_1,\ldots, X_n \in \mathcal{S}^d$. We can approximate PNS by first obtaining the arithmetic  mean $\bar X^A = \frac{1}{n} \sum_{i=1}^n X_i$, which is then normalized on the sphere to $\bar X = \bar X^A /\| \bar X^A \|$, and principal component loading vectors 
$V_1,\ldots,V_p$ which are the eigenvectors corresponding to the largest $p$ eigenvalues of the covariance matrix of the orthogonal tangent coordinates at $\bar X$:  $T_i = X_i -  (\bar X^TX_i) \bar X$, where $i=1,\ldots,n$ and $p < n$. 
  For any $1\leqslant p<  d+1$, the $p$-dimensional unit sphere $\mathcal{S}_{\bar X;V_1,\ldots,V_p}$ in the $(p + 1)$-dimensional linear subspace spanned by $\bar X,V_1,\ldots,V_p$ is a $p$-dimensional subsphere of $\mathcal{S}^d$, where $p \le d$, and the first $p$ eigenvalues are required to be strictly positive. If we choose $p$ such that the first $p$ principal components explain a high proportion of the total variation of the data then, without much loss of information, the use of the projection of $X_i$ to $\mathcal{S}_{\bar X;V_1,\ldots,V_p}$ as the input for the principal nested spheres analysis will reduce the initial dimension of the sphere and improve the speed of computation. If $p=d$ then we would like the resulting 
 PNS calculation to be exact, i.e. to lose no information. 

The projection of $X_i \in \mathcal{S}^d$ to $\mathcal{S}_{\bar X;V_1,\ldots,V_p}$ can be approximated by using the principal component scores, where the principal component score for the $i$th configuration on the $j$th principal component is given by 
$\tilde\lambda_{ij}=\langle X_i,V_j\rangle, i=1,\ldots,n, j=1,\ldots,p$.
For this, let
\[W_i=\rho(\bar X,X_i)\frac{T_i}{\|T_i\|}\in\mathcal T_{\bar X}(\mathcal{S}^d) , \]
where $T_{\bar X}(\mathcal{S}^d)$ denotes the tangent space to the sphere $\mathcal{S}^d$ at $\bar X$. Then, $W_i$ is the image of $X_i$ under the inverse exponential map of $\mathcal{S}^d$ at $\bar X$, and, in particular, $\|W_i\|=\rho(\bar X,X_i)$ and 
$\rho()$ is the great circle (Riemannian) distance on $\mathcal{S}^d$. By writing
\[\lambda_{ij}=\langle W_i,V_j\rangle=\frac{\rho(\bar X,X_i)}{\|T_i\|}\tilde\lambda_{ij},\]
the projection of $W_i$ in the subspace, spanned by $V_1,\ldots,V_p$, of the tangent space $\mathcal T_{\bar X}(\mathcal{S}_d)$ is
\begin{equation}\label{eq:tancoords}
U_i=\sum_{j=1}^p\lambda_{ij}V_j.
\end{equation}
Note that $\|U_i\|^2=\sum_{j=1}^p\lambda_{ij}^2$.
The image $X^*_i$ of $U_i$ under the exponential map back in the sphere is
\begin{eqnarray*}
X_i^*&=&\bar X\cos(\|U_i\|)+\frac{U_i}{\|U_i\|}\sin(\|U_i\|)\\
&=&\cos(\|U_i\|)\,\bar X+\frac{\sin(\|U_i\|)}{\|U_i\|}\sum_{j=1}^p\lambda_{ij}V_j\in\mathcal{S}_{\bar X; V_1,\ldots,V_p}
\end{eqnarray*}
where, if $\|U_i\|=0$, we take $X^*_i=(1,0,\ldots,0)$. When $X_i$ is sufficiently close to $\mathcal{S}_{\bar X;V_1,\ldots,V_p}$, $X^*_i$ gives a good approximation to the projection of $X_i$ to $\mathcal{S}_{\bar X;V_1,\ldots,V_p}$. 

Since $\bar X, V_1,\cdots,V_p$ are orthonormal, they form a basis of the $(p + 1)$-dimensional linear subspace spanned by themselves and, in terms of this new basis, $X^*_i$ can be expressed with co-ordinates
 \[ \left(\cos(\|U_i\|),\frac{\sin(\|U_i\|)}{\|U_i\|}\lambda_{i1},\cdots,\frac{\sin(\|U_i\|)}{\|U_i\|}\lambda_{ip}\right).\]  
Note that many of the orthogonal axes in the reduced dimension PNS are expected to line up with the PC axes, especially for the smaller variability components. So, the reduced dimension PNS axes in this representation are often 0's and one element either 1 or very close to 1. 

Given a reduced dimension PNS object and the scores we can map back to obtain the corresponding coordinates on the sphere $\mathcal{S}_{\bar X;V_1,\ldots,V_p}$ using the R command {\tt PNSe2s} from the {\tt shapes} library. 
We will denotes these spherical coordinates as $G = (G_1, G_2, \ldots, G_{p + 1})^\top \in\mathcal{S}_{\bar X;V_1,\ldots,V_p}$. 
The principal component scores corresponding to $G$ are given by $(G_2, \ldots, G_{p + 1})$ and hence the corresponding co-ordinates in the original high-dimensional spherical space are 

\begin{equation} X_{high} = G_1 \bar X +  \sum_{j=1}^p G_{j+1}  V_j  \in \mathcal{S}^d  \label{exactfit}
\end{equation}

If the full number of PCs is used $p=d$ then $X_{high}$ is exactly equal to the original $X_i \in \mathcal{S}^d$. 

Other transformations could be used, for example \citet{Drydenetal19} considered 
\begin{equation}\label{eq:pc.rev}
\frac{s}{\sin(s)}\left(G_2,G_3,\ldots,G_{p + 1}\right)^\top
\end{equation}
where $s = \cos^{-1}(G_1)$ for the PC scores.  Hence, an approximation in the inverse exponential map tangent space to the high-dimensional sphere is given by:
$$ X_{tan} = \bar X +  \frac{s}{\sin(s)} \sum_{j=1}^p G_{j+1}  V_j  \in \mathcal{T}_{\bar X}( \mathcal{S}^d )  . $$
A further possibility is the extrinsic approximation
$$ X_{E} = \bar X^A +  \sum_{j=1}^p G_{j+1}  V_j , $$
which is then normalized. 

\section{PNS biplot and back-fitting}
An important question of interest when carrying out principal nested spheres analysis is: what interpretation can we give to each PNS component in terms of the original variables? This is equivalent to asking: which of the original variables contribute most to each 
PNS score? When we carry out PCA we inspect the loadings for each 
component which are helpful to provide an interpretation of each PC. A large magnitude loading for a variable means that the particular variable is important, with either a positive or negative effect depending on the sign of the loading. Also, a biplot \citep{Gabriel71} is another important graphical method that is often used to display the effects of PCs. 
In the biplot, a plot of the first two PC scores is given and in addition vectors are drawn from the origin to indicate the loadings from each variable on the first two principal components.  

In this section we provide a novel type of biplot for PNS which aims to do the same task as for PCA except for 
the non-linear PNS decomposition. We call the method the {PNS biplot}. 

The PNS biplot is a useful tool for examining what effects each PNS score represents. The method involves moving along a path for 
each of the PNS scores and then transforming the path to the original space of the variables. This produces a path in the original variable space for each 
PNS score, with the x-coordinate on the biplot given by the path from PNS1 and the y-coordinate given by the path from PNS2. 
Plotting the joint x-y paths minus the PNS mean
for each variable gives the main structure of the PNS biplot. 
Each path must pass through the origin, but the 
large magnitude paths in either the x or y direction indicate variables which have a strong effect in either PNS1 or PNS2 respectively. 
In addition to the plot of the paths we also include alongside a plot of the first two PNS scores. 

This method is similar to the biplot in PCA 
\citep{Gabriel71} in
which the paths are straight line vectors, and each 
line is proportional in length in the x and 
y directions of the biplot to the PCA 
loadings of PC1 and PC2. In the usual biplot scores and vectors are given on the same diagram (hence the name ``bi" plot). However, in our high dimensional applications the plot can become very crowded, and so we provide the two plots side-by-side for clarity.

In more detail, the PNS biplot involves constructing a path for each original variable $m$ which is given by the $m$th ($m=1,\ldots,d+1$) component of the 
following vectors: 
\begin{eqnarray*} 
\mathbf{x}(\lambda_1) & = & h^{-1}\{ ( \lambda_1 , 0,\ldots, 0) | \Phi \} - \zeta_0\\
\mathbf{x}(\lambda_2) & = & h^{-1}\{ ( 0, \lambda_2 , 0,\ldots, 0) | \Phi \} - \zeta_0
\end{eqnarray*}
where $\zeta_0$ is the PNS mean and $h( )$ is the map from the original co-ordinates to the PNS scores given in Equation (\ref{gfun}). Values of $\lambda_1$ and $\lambda_2$ are both varied from $-2$ to $+2$ standard deviations and the resulting paths $(\mathbf{x}(\lambda_1), \mathbf{x}(\lambda_2))$ are plotted on an x-y plot for each original variable,  
with an arrow-head at +2 standard deviations. When  PNS score $j$ sweeps a large path for the $m$th variable it means that variable $m$ has a strong effect on the $j$th PNS score. Similarly, with a very short path the variable has little effect. Such PNS biplots are useful for interpreting the main contributions of the variables to each PNS score. We call the process of going from the PNS scores back to the original variables { back-fitting}. By plotting the first two PNS scores alongside the back-fitted path plot we can see which original variables are associated with which original observations, e.g. observations with high PNS1 scores are positively associated with the original variables that have a long path with arrow head pointing along the PNS1 axis.

Back-fitting from PNS to the original variable space is also appropriate when using fast PNS. The procedure is the same as for the PNS biplot except there is an additional stage mapping back from the 
PC scores to the original variables, using an orthogonal transformation. Using the {\tt shapes} package in R  \citep{DrydenGithub} this can be carried out using 
{\tt pns\_biplot}. 

Another related example of back-fitting is when using principal nested shape spaces (PNSS) \citep{Drydenetal19}. PNSS is similar to PNS except the data 
are geometrical configurations represented by 
$k \times m$ matrices of Cartesian co-ordinates of $k$ points in $m$ dimensions. The configurations are centered and rescaled to have unit size, and then at each stage of PNSS analysis the configurations are rotated optimally. 
In the PNSS back-fitting procedure the PNSS scores are mapped back to a geometrical configuration in the original data space which can be visualized, as demonstrated in \cite{Drydenetal19}. For example, a path $\gamma(t)$
in PNSS score space is mapped back to the original 
Cartesian coordinates, which are each functions of $t$.  Movies of 
the objects as $t$ varies can be helpful in providing interpretations to each of the PNS scores. Similar plots are available for the commonly used tangent space PCA plots in shape analysis \citep[Section 7.7]{Drydmard16}.

\section{Applications}
\subsection{Melanoma Study}
We consider a study involving $205$ patients suffering from skin cancer and operated for malignant melanoma at the German Cancer Research Center, Heidelberg \citep{Mianetal05,Browetal10}. The data were collected using SELDI mass spectrometry, where peak heights are measured for a range of mass over charge (m/z) values corresponding to peptides present in the sample. This particular dataset has $500$ m/z (mass over charge) peak heights (intensity), which are the largest peak heights are obtained using the algorithm of \citet{Browetal10}. There are $205$ patients in two groups where the first $101$ observations are Stage 1 and the last $104$ observations are Stage 4, as visualized in Figure \ref{cancer}. 

\begin{figure}[htbp]
     \centering
     \includegraphics[width=13cm]{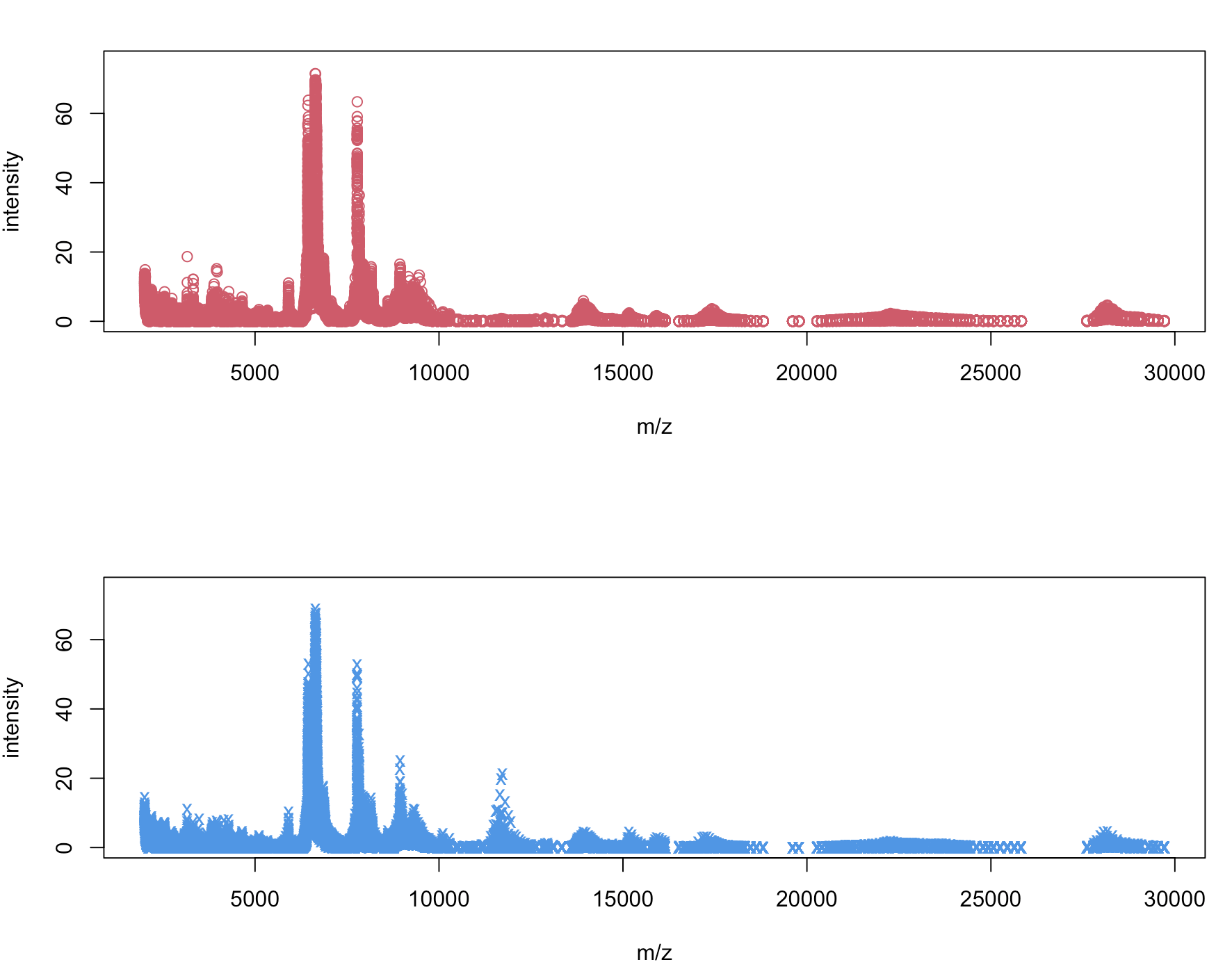}
     \caption{Melanoma proteomics data: intensity versus m/z. Stage 1 ($n=101$) patients (above, red) and Stage 4 ($n=104$) patients (below, blue).}
     \label{cancer}
\end{figure}

For such proteomics data we often wish to normalize the data, so that the overall scale of the measurements for an individual is removed. 
We normalize the data by dividing each observation vector by its Euclidean norm, so that each observation is on the sphere $S^{499}$. In order to explore lower dimensional structure we use the fast PNS method of Section \ref{fastpns} using
the R function {\tt fastpns} \citep{DrydenGithub} with $d=30$. The first 30 PCA dimensions explain $95.42\%$ of the variance, so little information is lost in using the fast PNS approximation.

We consider the different model choice methods and in 
Figure \ref{radii}(a) we see the fitted radii for each method. 
Note that all the model choice methods have very similar radii for PNS1-3 and these are much less the great subspheres radius $1$. The variance test gives the larger radii for the 
higher dimensional PNS spheres, followed by the KS test, followed by the others. In Figure \ref{radii}(b) we see the percentage of variability explained, and the first five PNS scores explain about $79-88 \%$ of the overall variability for the methods, and so we choose five PNS scores for our analysis.  

\begin{figure}[htbp]
     \centering
     \includegraphics[height=5.5cm]{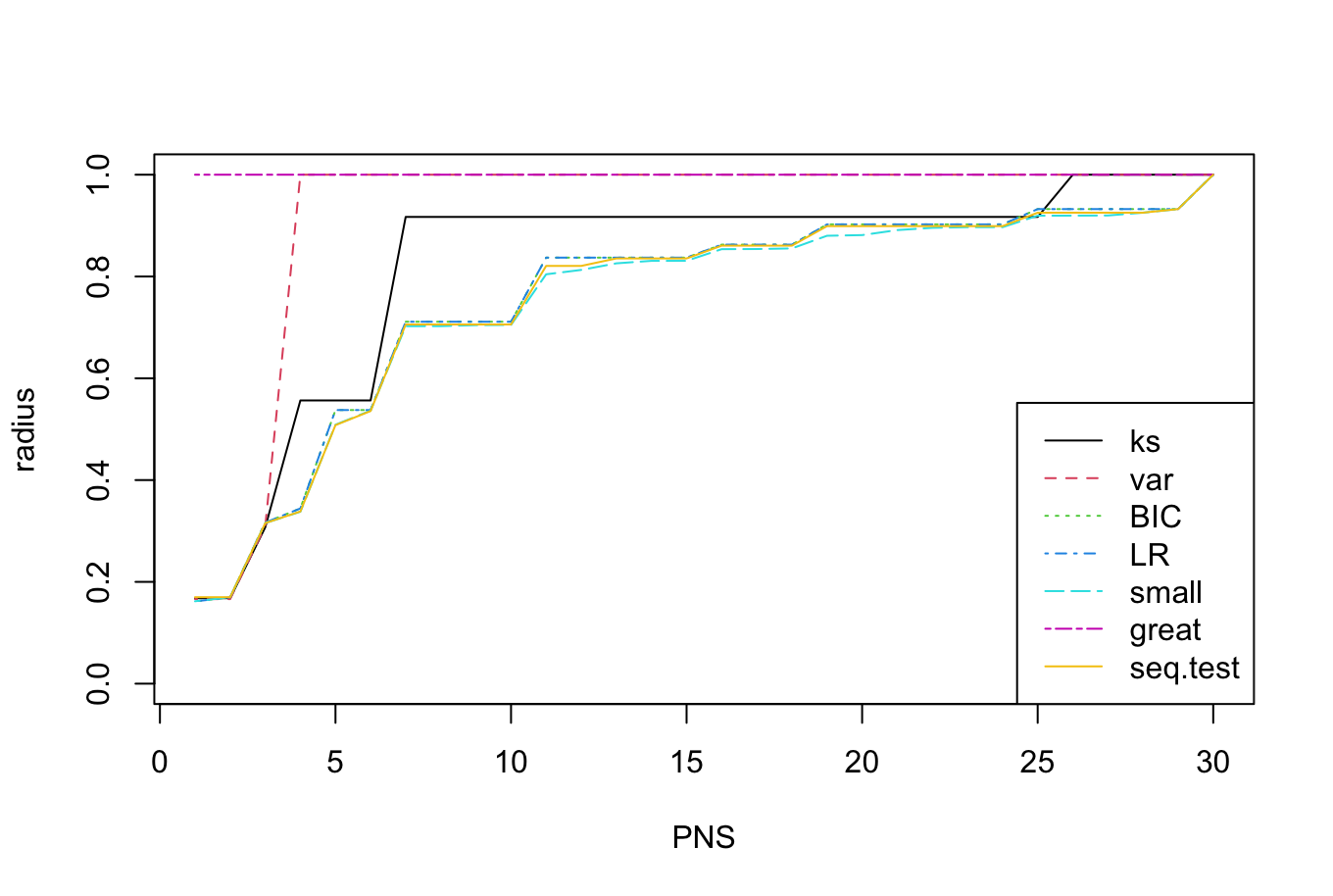}
         \includegraphics[height=5.5cm]{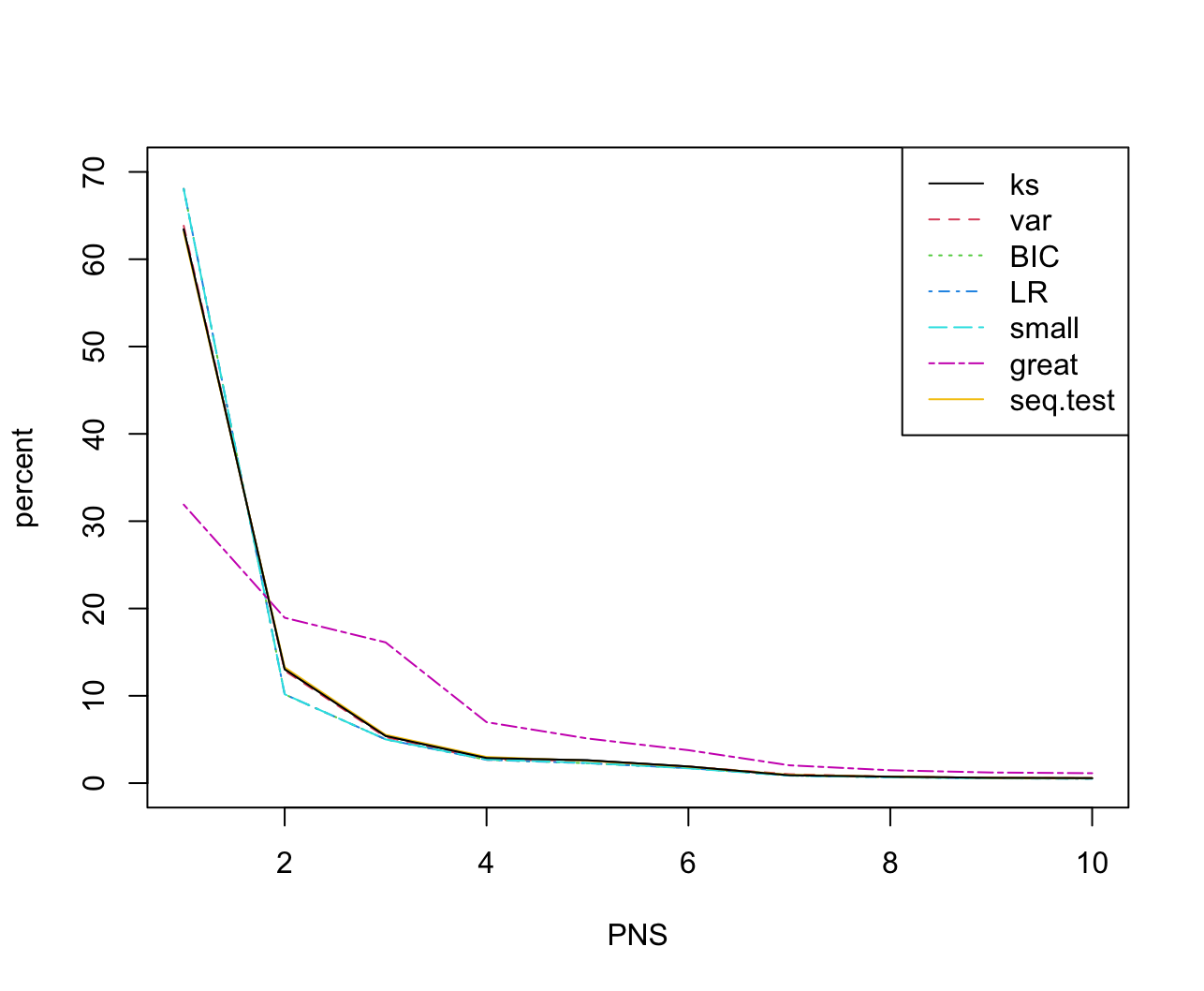}
     \caption{(a) Left: The fitted radii for each PNS level using the different model choice methods. 
     (b) Right: The percentage variability explained by the first ten PNS scores. The methods for choosing the subspheres are the Kolmogorov-Smirnov test (ks) of Section \ref{KStest}; 
     the variance test (var) of Section \ref{vartest}; BIC; likelihood ratio test (LR) of Section \ref{LR}; small subspheres (small); great subspheres (great); and the sequential test (seq.test) of \citet{Jungetal12}.}
     \label{radii}
\end{figure}

Plots of the PNS projections onto $S^2$ are given in Figure  \ref{fastpns-spheres} for the great subspheres case and the KS test case. In both cases there is a clear difference in the distribution of the Stage 1 and Stage 4 groups.

\begin{figure}[H]
    \centering
    \includegraphics[width=4.5cm]{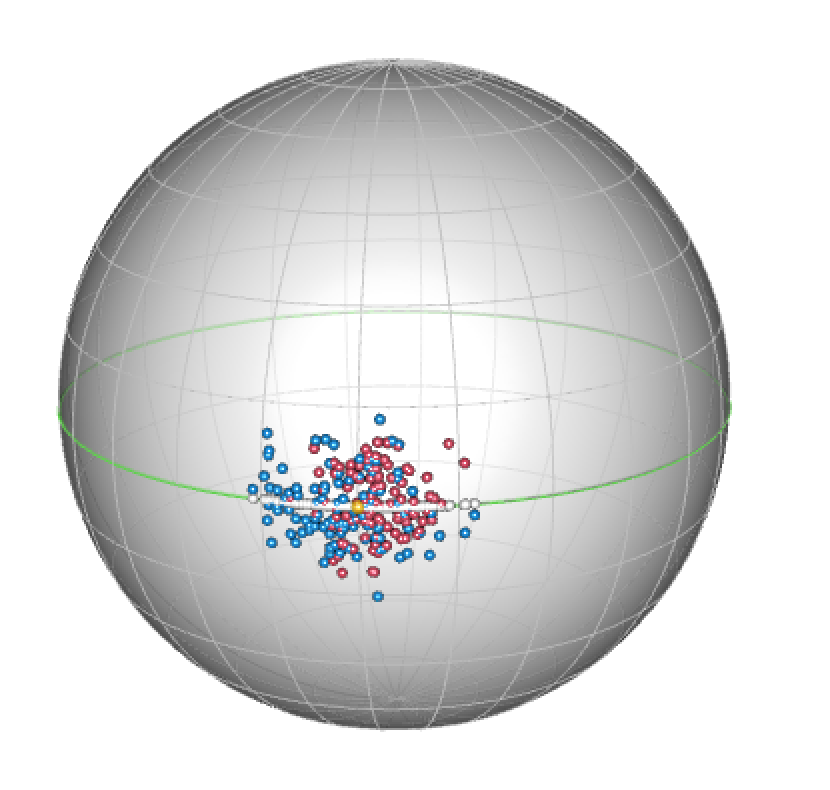}
    \includegraphics[width=4.5cm]{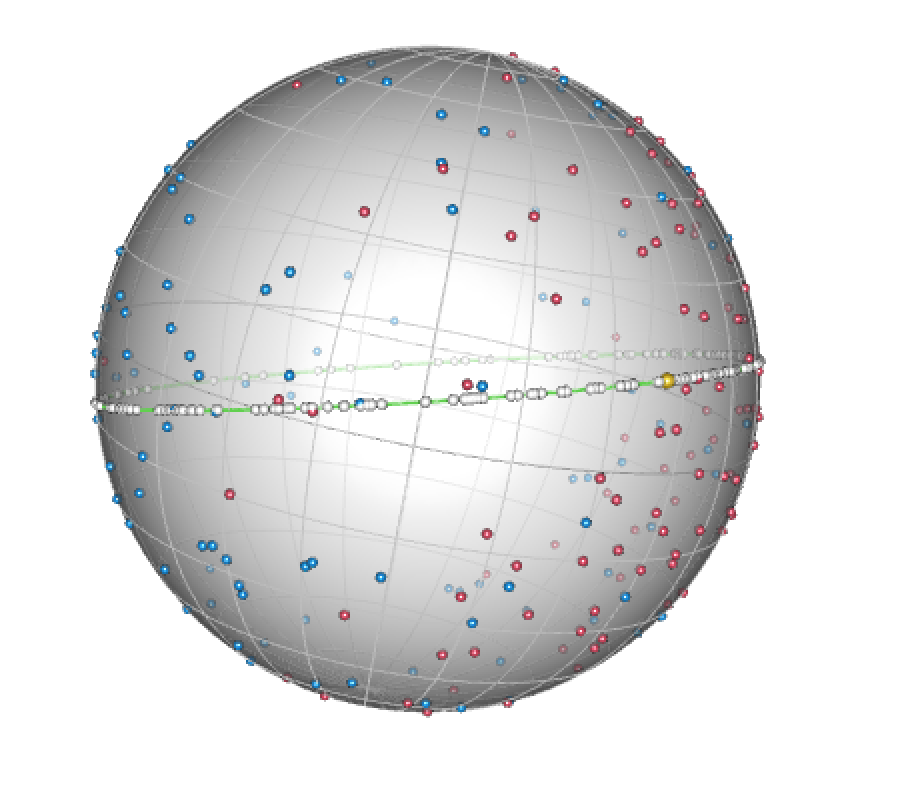}
     \includegraphics[width=7cm]{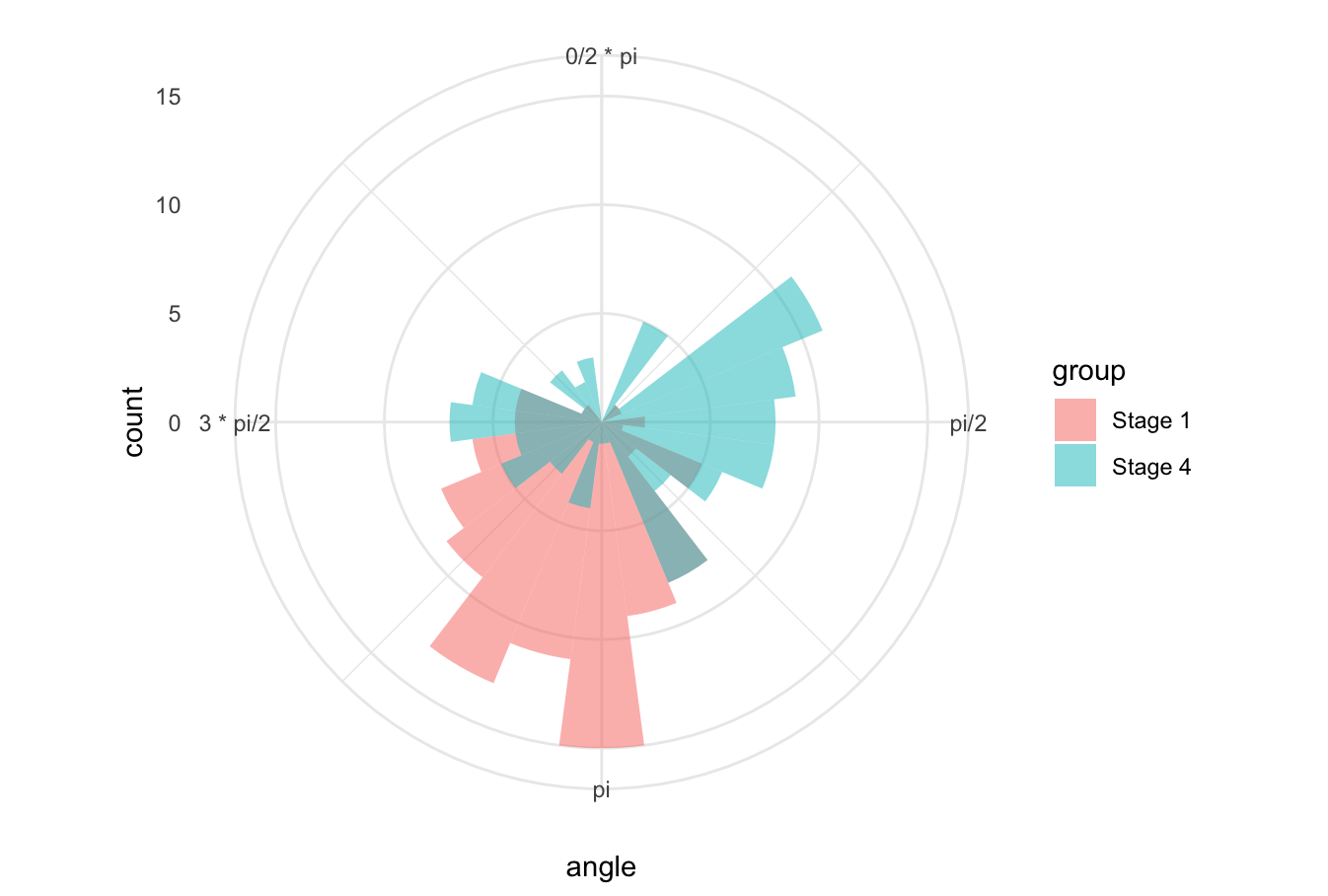}
    \caption{The fitted PNS in $S^2$ after using great subsphere fits at each stage (left) and the KS test (middle). The red points are from Stage 1 patients and blue are from Stage 4 patients. The green line is the fitted great circle, white dots are the data projected onto the fitted great circle, and the yellow dot is the PNS mean. 
    The radius of the left-hand sphere is 1 and the radius of the middle sphere is 0.168 (smaller than shown). A circular histogram for the KS PNS score 1 is shown in the right-hand plot.}
    \label{fastpns-spheres}
\end{figure}

We conducted two sample t-tests to compare great subsphere PNS scores between two groups Stage 1 ($n=101$) and Stage 4 ($n=104$) cancer patients, and we see that there are clearly significant differences in mean score between Stage 1 and 4 for PNS score 1, 3, 4 (p-value $< 0.01$), but not for PNS score 2, 5 (p-value $> 0.1$). Similar results are also obtained 
when investigating the PC scores. For the KS PNS score 1 rescaled to the unit circle we use Watson's 
two sample test for homogeneity \citep{Jammetal01} with test statistic $1.8127$ and p-value $< 0.001$. For KS PNS score 2-5  (p-value $>0.1$) we use a two sample t-test. Hence the difference between the Stage 1 and 4 distributions is only seen in KS PNS score 1, and not in KS PNS scores 2-5.

Figure \ref{fastpnsbiplot} shows the PNS biplot for this dataset using the KS PNS fit.  The effects of each variable on PNS1 and PNS2 are given by the lines with arrows for different m/z peak heights such as $6436, 6451, 6475, 6634, 6670, 7778, 7828$
which play important roles in representing the variability in the PNS1 direction, which in turn could be useful for separating the different stages given the difference in PNS1 score distributions for stages 1 and 4.  

\begin{figure}[htbp]
     \centering
     \includegraphics[width=8cm]{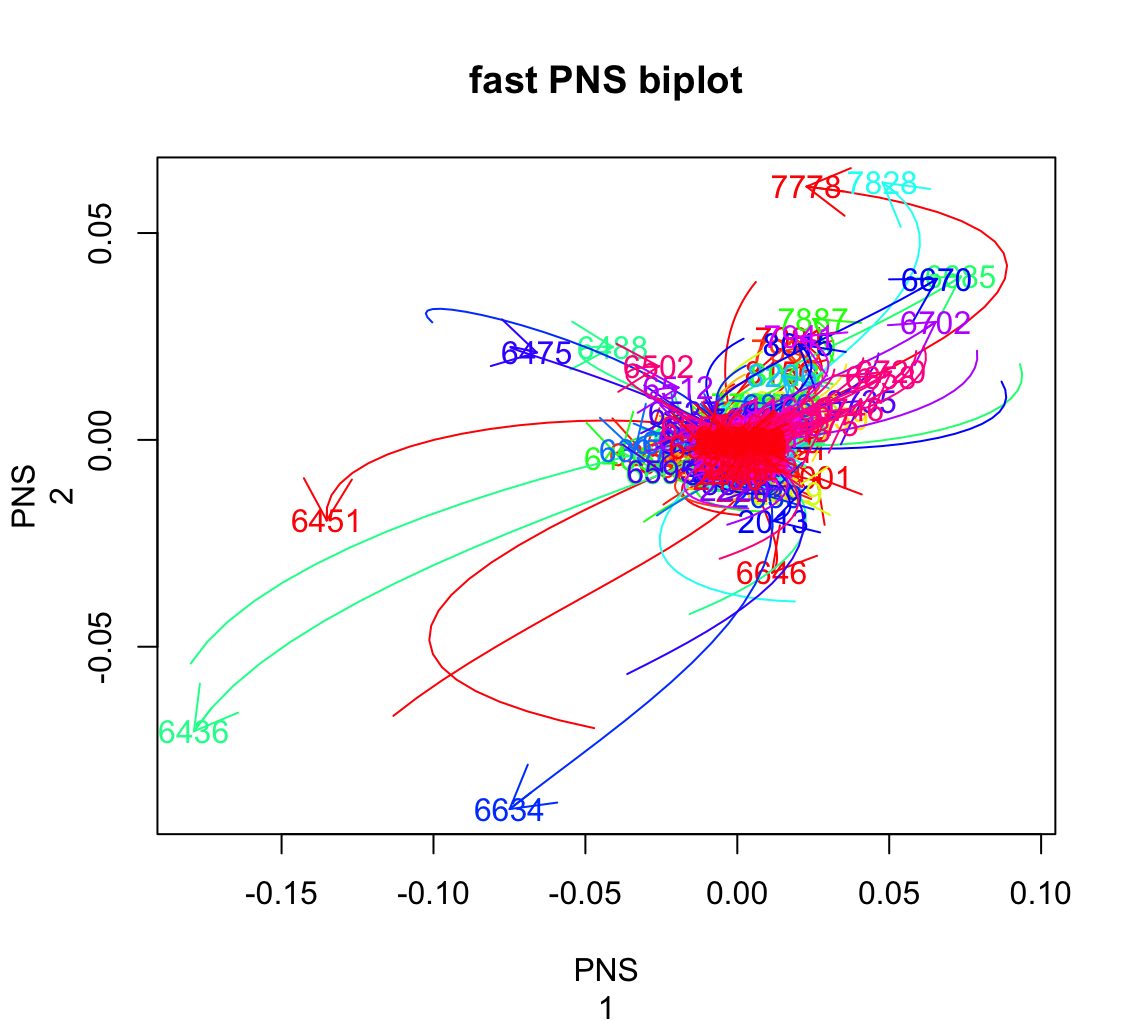}
     \includegraphics[width=8cm]{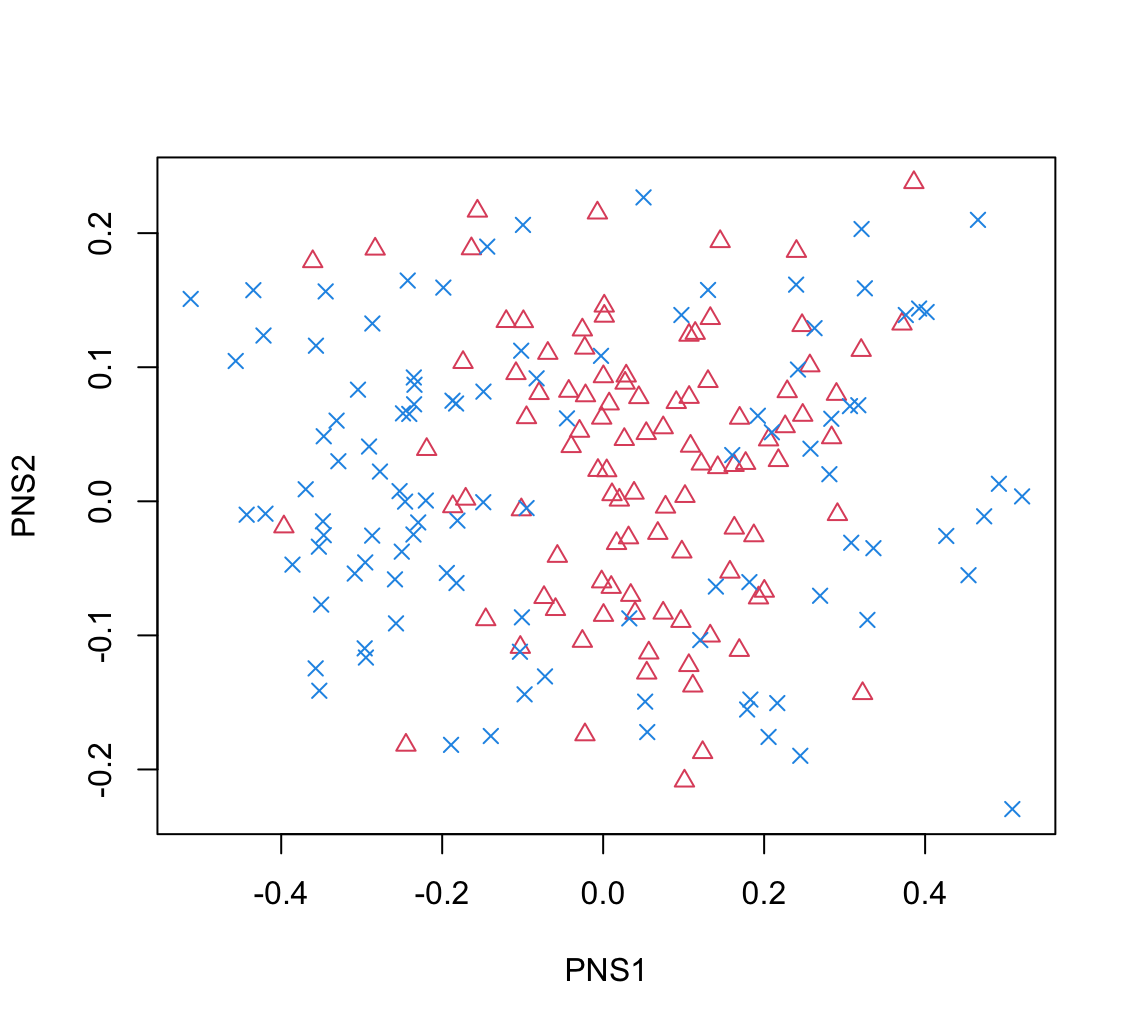}
     \caption{PNS biplot for the KS PNS fit to visualize (left) the importance of the m/z peaks for PNS1 and PNS2. The colors for the paths use a rainbow color scheme, with red being the initial ordered peaks in the dataset (larger peaks). (Right) The PNS scores for the data with Stage 1 (red triangle) and Stage 4 (blue x) indicated. }
     \label{fastpnsbiplot}
\end{figure}

We investigate the performance of logistic regression models with binary response given by Stage 1 versus Stage 4 using Monte Carlo cross validation with 90\% training data and 10\% test data from 1000 random splits: (i) 7 m/z peaks $6436, 6451, 6475, 6634, 6670, 7778, 7828$ selected from the fast PNS biplot, (ii) KS PNS scores (PNS1–PNS5), and (iii) great PNS scores (PNS1–PNS5). Model (iv) is introduced below. 

\begin{table}[ht]
\centering
\caption{Monte Carlo cross-validation results for the Melanoma data. The Accuracy is the average proportion of correct classifications in the test data; the Sensitivity is the average proportion of Stage 4 patients that are classified correctly; and the Specificity is the average proportion of Stage 1 patients classified correctly. }
\label{tab:melanoma_cv}
\begin{tabular}{l c c c}
        &       Accuracy    &    Sensitivity   &    Specificity   \\
        \toprule
(i) 7 peaks     &    0.80 & 0.77 & 0.83\\
(ii) KS PNS    &    0.75 & 0.72 & 0.80\\
(iii) great PNS  &   0.75  & 0.73 & 0.79\\
\hline
(iv) 9 peaks  &   0.85  & 0.83 & 0.87\\
\bottomrule
\end{tabular}\label{TAB1}
\end{table}

The models (i)-(iii) show good discrimination in Table \ref{TAB1} (accuracy $0.75-0.80$, sensitivity $0.72-0.77$ and specificity $0.79-0.83$), but the logistic regression model using the fast PNS biplot selected peaks is a little better overall and interpretation is easier than for the PNS models. 
For models (ii) and (iii) the PNS is estimated on the training data for each random split, and we use the sine and cosine of the PNS score 1 scaled to the unit circle as predictors, in addition to PNS scores 2-5 for prediction on the test data.

\begin{figure}[htbp]
     \centering
     \includegraphics[width=8.5cm]{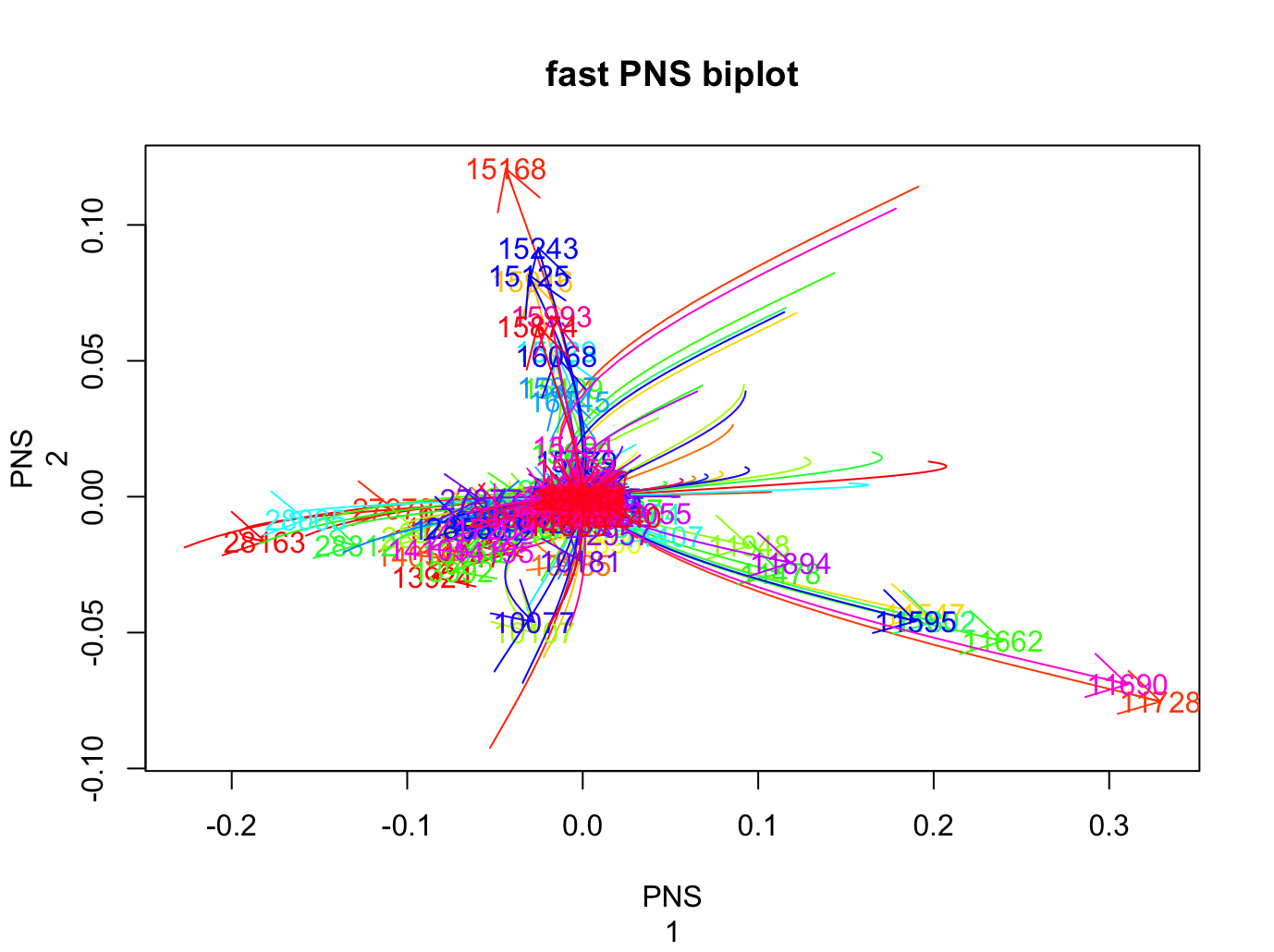}
     \includegraphics[width=7cm]{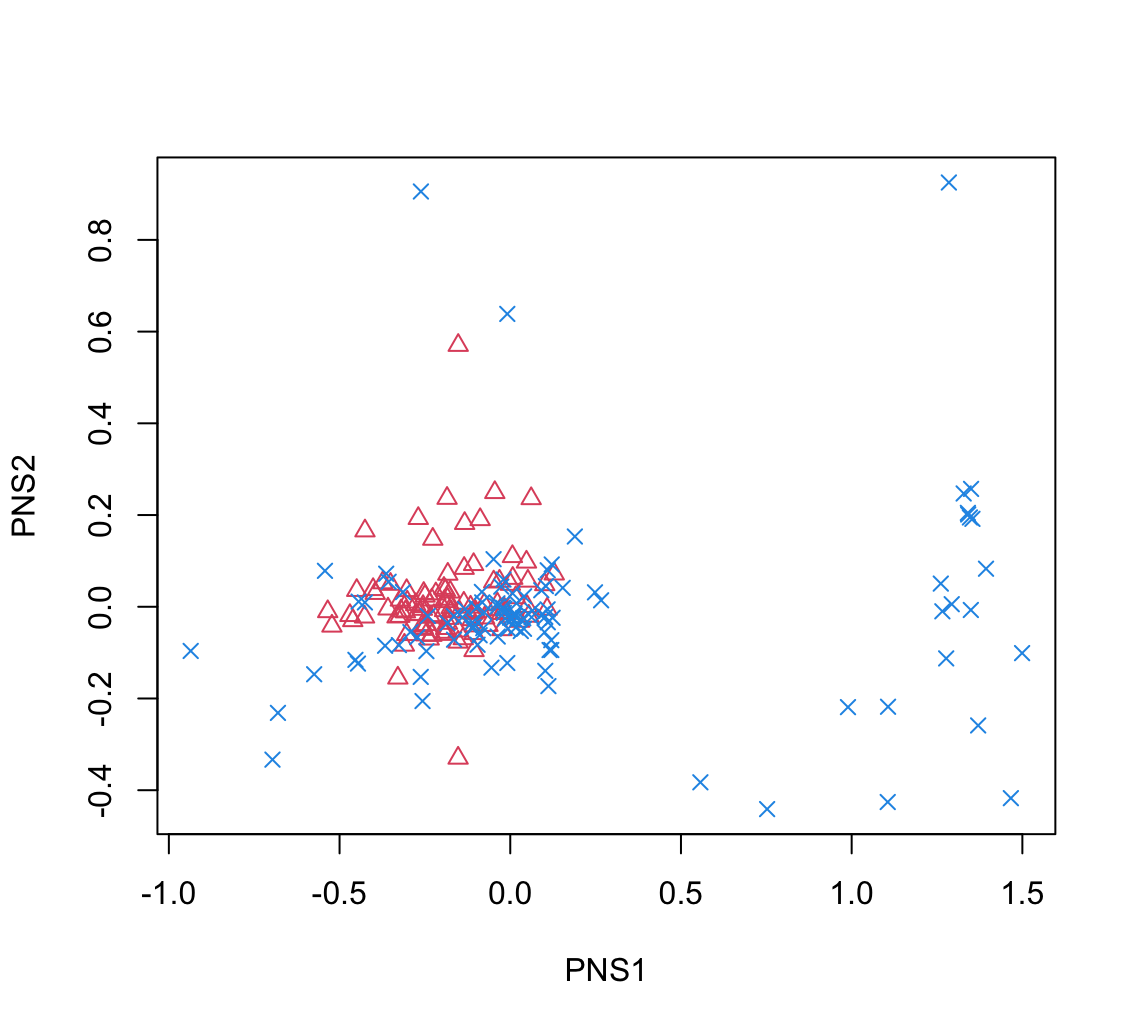}     
        \caption{PNS biplot for the KS PNS fit to visualize (left) the importance of the m/z $> 10000$ peaks for PNS1 and PNS2. The colors for the paths use a rainbow color scheme, with red being the initial ordered peaks in the dataset (larger peaks). (Right) The PNS scores for the m/z $> 10000$ data with Stage 1 (red triangle) and Stage 4 (blue x) indicated. }     
     \label{heavy}
\end{figure}

From \citet{Browetal10} some of the higher m/z values are also important for 
distinguishing between stage 1 and 4, and so we also carry out fast PNS on the 
$p=174$ peaks (normalized to the sphere $S^{173}$) which are above m/z=$10000$. In Figure \ref{heavy} we see the fast PNS biplot from KS model choice which highlights the peaks 11,728 and 28,263 as important peaks contributing to PNS1. There are other neighboring peaks that stand out too, but 11,728 and 28263 have the longest paths in the positive and negative PNS1 directions, respectively. Note that 11,728 is especially important for a sub-group of Stage 4 patients with large PNS1 scores here. 
This is an example where the PNS biplot interpretation is particularly clear: large values at m/z = 11,728 are associated with a group of Stage 4 patients with high PNS1 scores.  

We denote the logistic regression model including the two extra peaks 11,728 and 28,263 
as model (iv) in Table \ref{TAB1}, which has 9 m/z peaks in total as predictors. 
There is a clear 
improvement in classification performance. The peak at 11,728 has been identified in other studies \citep{Mianetal05,Findeisenetal09} as serum amyloid A which, together with other biomarkers, has prognostic prediction of malignant melanoma patients. The classification results using all 9 selected peaks
from the fast PNS biplots is given in Table \ref{TAB0} with $86.3\%$ accuracy, $84.6\%$ sensitivity and $88.1\%$ specificity. 

\begin{table}[h]
\centering
\caption{Classification of Melanoma data}
\begin{tabular}{l c c c}
& $\;$ & \\
& Predicted & Predicted\\
        &       Stage 1   &    Stage 4    \\
        \toprule
True Stage 1     &    89 &12 \\
True Stage 4   &    16 & 88 \\
\bottomrule
& 
\end{tabular}\label{TAB0}
\end{table}

\subsection{Pan Cancer RNA-seq data}
We consider an investigation into different types of cancer using 
RNA sequencing (RNA-seq) data. The Pan Cancer study involves 
the collection of RNA-seq data from $n=300$ patients with six different types of cancer in equal sized groups of $50$ \citep{Hoadley2014TCGA-PanCan,Marrdryd22}. The cancer groups are Bladder Cancer (BLCA), Kidney Renal Cancer (KIRC), Ovarian (OV), Head and Neck Squamous Cancer (HNSC), Colon Adenocarcinoma (COAD) and Breast Cancer (BRCA). For each patient the transcriptome is measured using RNA-seq technology to detect and quantify all RNA molecules in a sample, which in turn provides gene expression levels. In this dataset there are $d+1=12,478$ gene measurements per patient.

We normalize the data to remove scale effects by dividing each observation vector by its Euclidean norm, so that each vector is on the sphere $S^{12477}$. This is a very high-dimensional sphere and so again we shall use the fastpns method of Section \ref{fastpns}. Using the R function {\tt fastpns}, the PNS method is applied to the reduced dimension sphere based on the first $30$ principal components, i.e. a reduced sphere in $p=d+1=30+1=31$ dimensions. The fitted PNS scores using small subspheres at each stage starting with $30$ PC dimensions are shown in Figure  \ref{PNScancer}. The first $30$ PC dimensions explain $69.1\%$ of variance. So by choosing only $30$ dimensions instead of $12,478$ this saves a lot of time in computation for the analysis but at the expense of losing $30.9\%$ of the variability.

The PNS model choice methods give very similar fitted radii to each other, and we choose small subspheres fits as our main approach here, and compare it to using the great subsphere fits. Note that although small subspheres are fitted, the fitted radii are quite large, with the smallest radius 
being $0.92$, which is close to the great subsphere fits with radius $1$ at each level. 
The percentage variability explained is 
seen in the left plot in Figure \ref{PNScancer}. Note that three small PNS scores summarize 55.92\% of the overall variability. 

\begin{figure}[htbp]
    \centering
    \includegraphics[width=8cm]{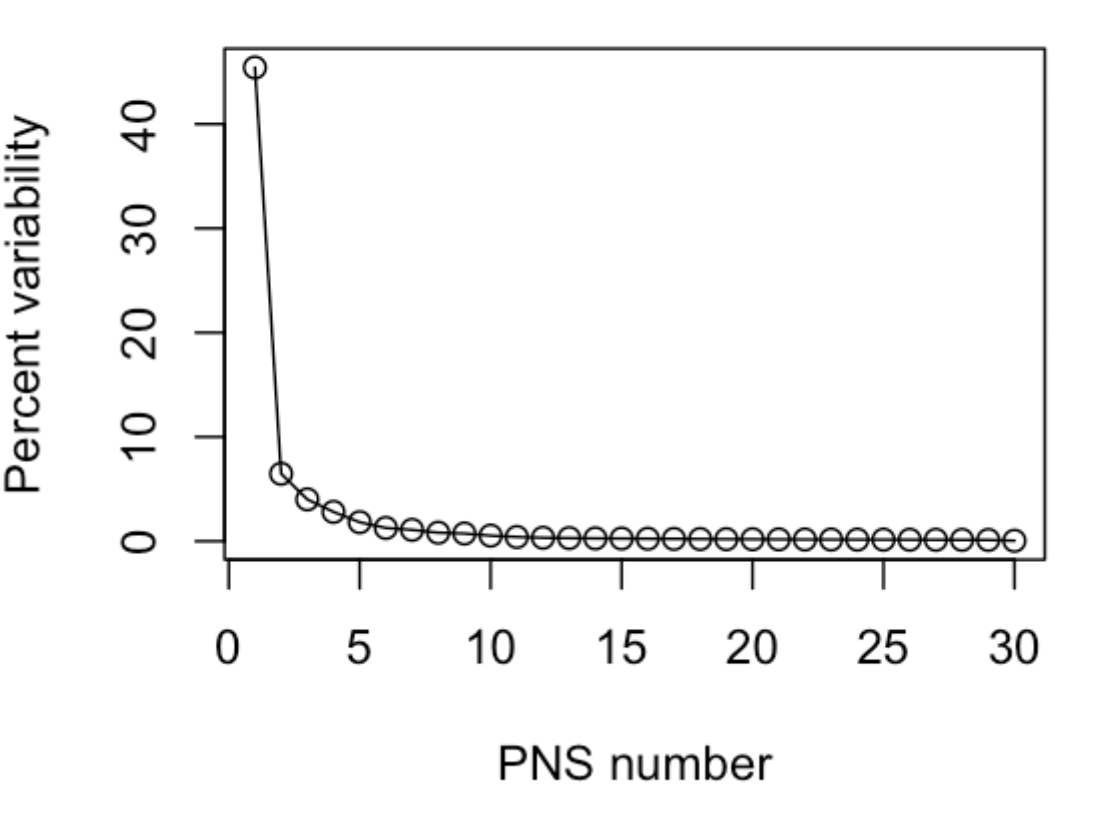}
       \includegraphics[width=7cm]{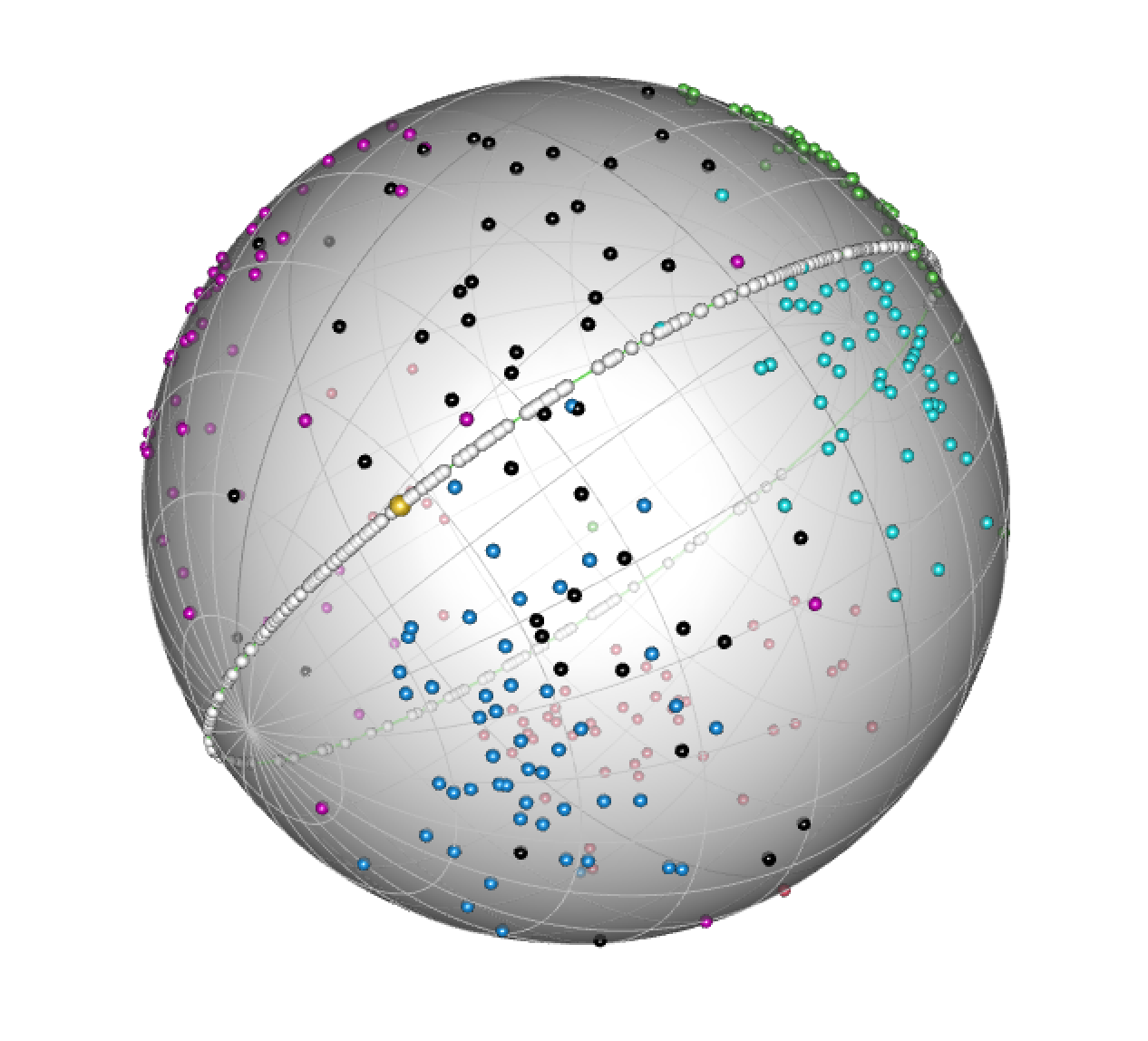}
    \caption{(Left) The percent variability explained by each PNS score for the Pan Cancer dataset, with small subsphere fits at each level. (Right) Fitted PNS with small subsphere fits at each stage. Here, different colors are for $6$ different groups. The green line is the fitted line, the white dots are projected data in the fitted line, and the yellow dot is the PNS mean of the fitted dataset.    
    }
    \label{PNScancer}
\end{figure}

A plot of the projected data on the fitted two-dimensional PNS sphere with the 
one-dimensional best fitting PNS circle is given in the right plot of Figure \ref{PNScancer}.
The projected points are spread widely over this sphere, but the distinctions between the groups can be seen.

Figure \ref{fastpnsbiplotpan} shows the PNS biplot for this dataset using the command {\tt pns\_biplot} in R \citep{DrydenGithub}. The effects of each variable on PNS1 and PNS2 are given by the curved arrows for different genes such as {\tt 
LYPD3|27076, ESR1|2099, ABP1|26, 
S100A8|6279, S100A9|6280, GATA3|2625, PAX8|7849}
which stand out from the others. There are clear distinctions between groups in the plots of PNS score 2 vs PNS score 1 
in the righthand plot of Figure \ref{fastpnsbiplotpan}. Recall that PNS score 1 is a circular variable, so those observations at about PNS1 score of -3 are actually close to those at +3.

\begin{figure}[htbp]
     \centering
     \includegraphics[width=8cm]{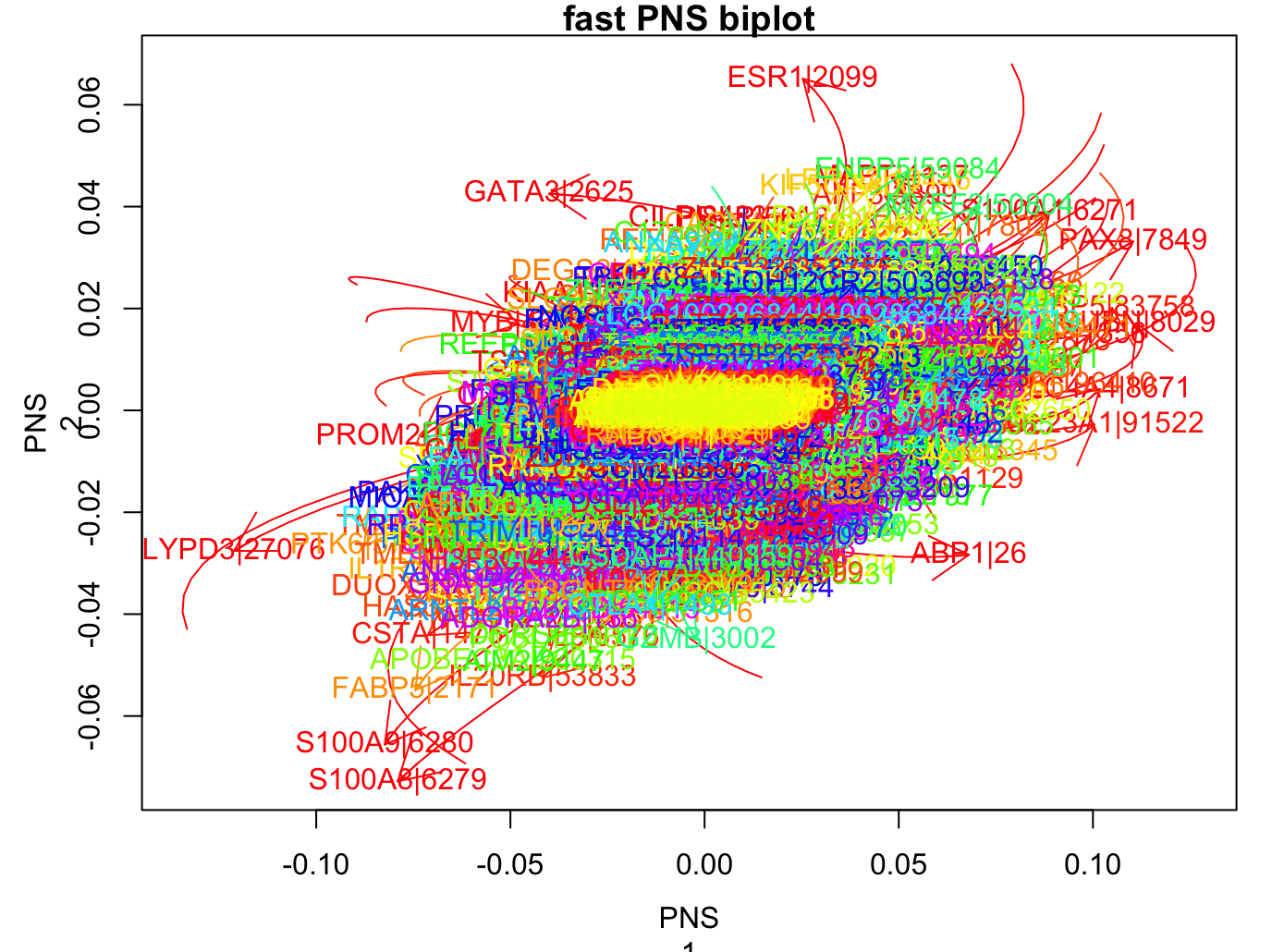}
     \includegraphics[width=8cm]{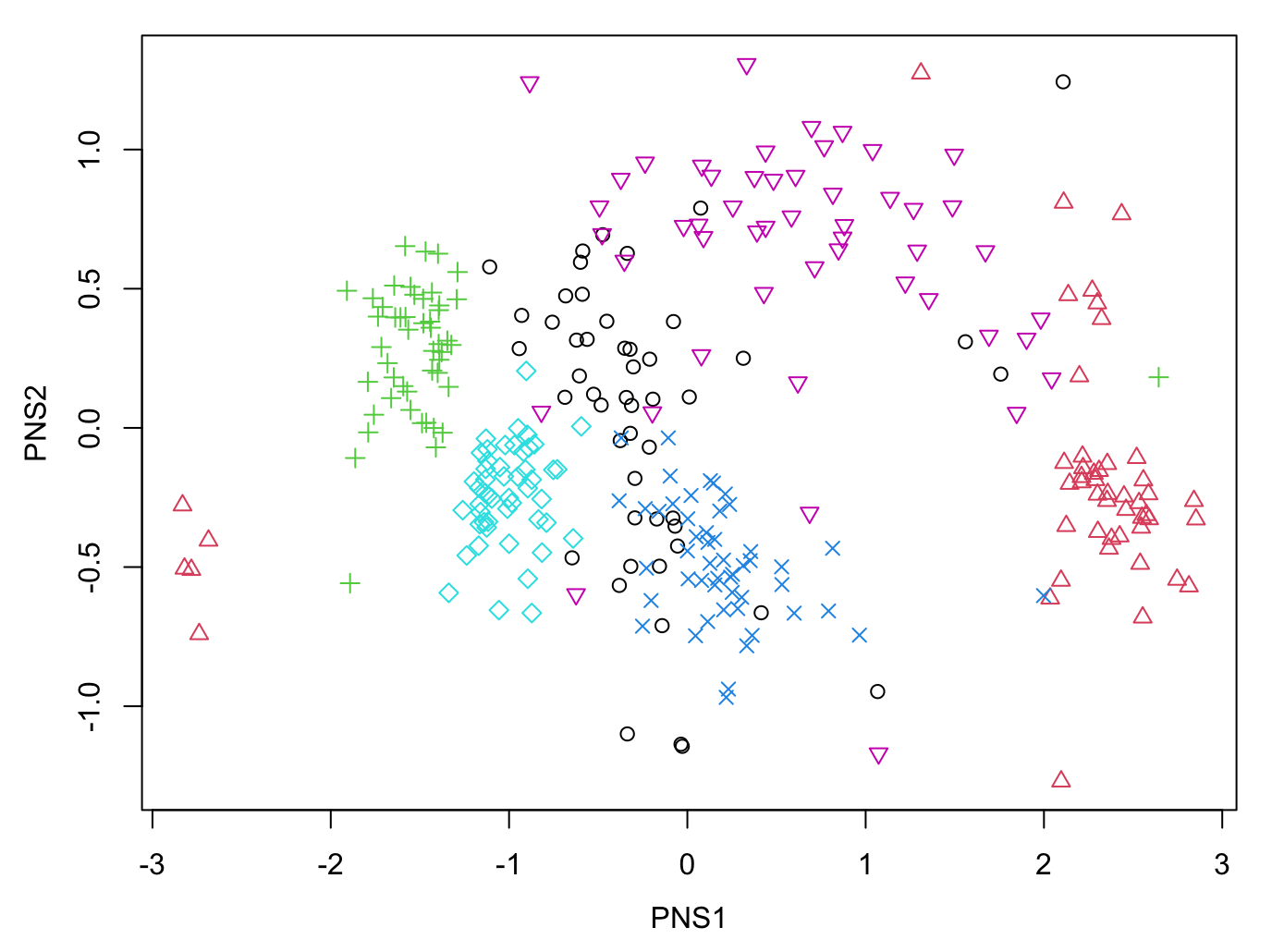}     
        \caption{PNS biplot for the small subspheres PNS fit to visualize (left) the importance of the genes for PNS1 and PNS2. 
        The colors for the paths use a rainbow color scheme, with red being the initial ordered genes in the dataset (larger intensity genes).         
        (Right) The PNS scores for the data, and
        the different colors and symbols are for different groups. The groups are: Bladder Cancer (BLCA, black circle), Kidney Renal Cancer (KIRC, red triangle), Ovarian (OV, green +), Head and Neck Squamous Cancer (HNSC, blue x), Colon Adenocarcinoma (COAD, cyan diamond), Breast Cancer (BRCA, magenta $\nabla$).}   
     \label{fastpnsbiplotpan}
\end{figure}

We use the linear discriminant analysis (LDA) to evaluate the discriminatory power of the selected genes where the outcome is six groups of cancer and the predictors are the seven standout genes from the PNS biplot. 
The overall classification accuracy is $93\%$ indicating that these genes effectively differentiate between the cancer groups, particularly 
COAD, HNSC, KIRC and OV groups which have $98\%, 96\%, 100\%, 100\%$ accuracy 
respectively.

\begin{table}[ht]
\centering
\caption{Monte Carlo cross-validation with 90\% training and 10\% test data.}
\label{tab:modelcv}
\begin{tabular}{lcc}
Model & Accuracy \\ 
\toprule
LDA (7-gene panel) & 0.92 \\ 
LDA (Small PNS scores 1--3) & 0.88 \\ 
LDA (Great PNS scores 1--3) & 0.85 \\ 
\end{tabular}\label{TAB3}
\end{table}

We evaluated three LDA models with Monte Carlo cross-validation with 1000 random splits in to training and test data. 
The predictors were (i) the 7 genes selected from the PNS biplot (LYPD3, ESR1, ABP1, S100A8, S100A9, GATA3, PAX8), 
(ii) the first three PNS scores using small subspheres, and 
(iii) the first three PNS scores using great subspheres. 
For each model, we trained on 90\% of the data selected at random (including PNS estimation if relevant) and predicted 
the groups for the 10\% held-out test data. We report the mean CV accuracy.
From Table \ref{TAB3} all models effectively classified well (Accuracy $\ge \!85\%$), with the selected gene model from the biplot performing the best. This excellent performance illustrates the effectiveness of using the PNS biplot in gene selection for prediction.

\section{Discussion}
This paper has provided a detailed summary of the PNS method, explored properties, and introduced some new techniques for high-dimensional spherical data and interpretation, notably fast PNS and the PNS biplot. 

The fast PNS method is a very practical method for analyzing high-dimensional data and could be used in other manifold dimension reduction examples. The key is to find a basis on a lower dimensional manifold which can be obtained by PCA. 
For example this technique is easily adapted for compositional data on a simplex, where 
the square root transformation of the composition maps the data onto an orthant of the sphere \citep{Scealwels11}. 

The PNS biplot is a useful visual technique for highlighting contributions of important original variables. Objective ranking of the importance of the variables can be obtained by ranking on the path length of the swept 
out biplot curves, as well as looking for distinct curves which highlight 
more independent variables. The biplot need not be restricted to the first two PNS values, for example any two PNS scores could be used. Also, plotting a third axis too would be possible in a 3D plot.  

We have introduced some new methods for deciding between great and small subspheres at each stage of the fitting procedure. The use of the KS test and variance test is quite different from existing methods, comparing the distribution of the residual lengths and the variance, respectively. There is subjectivity in which method to use, and so when the fitted radii using different methods are similar at the lower stages (as 
the last three stages for Melanoma and all stages for Pan Cancer) then this provides a set of PNS scores where results are not affected greatly by the method of model choice.

\section*{Acknowledgements}



The authors have no conflicts of interest to declare.


\bibliographystyle{apalike}
\bibliography{bibliography/PNS-highdim}

@article{Scealwels11,
author = {Scealy, J. L. and Welsh, A. H.},
title = {Regression for compositional data by using distributions defined on the hypersphere},
journal = {Journal of the Royal Statistical Society: Series B (Statistical Methodology)},
volume = {73},
number = {3},
pages = {351-375},
keywords = {Asymptotic approximation, Compositional data, Kent distribution, Regression, Square-root transformation, Zero components},
doi = {https://doi.org/10.1111/j.1467-9868.2010.00766.x},
url = {https://rss.onlinelibrary.wiley.com/doi/abs/10.1111/j.1467-9868.2010.00766.x},
eprint = {https://rss.onlinelibrary.wiley.com/doi/pdf/10.1111/j.1467-9868.2010.00766.x},
abstract = {Summary.  Compositional data can be transformed to directional data by the square-root transformation and then modelled by using distributions defined on the hypersphere. One advantage of this approach is that zero components are catered for naturally in the models. The Kent distribution for directional data is a good candidate model because it has a sufficiently general covariance structure. We propose a new regression model which models the mean direction of the Kent distribution as a function of a vector of covariates. Our estimators can be regarded as asymptotic maximum likelihood estimators. We show that these estimators perform well and are suitable for typical compositional data sets, including those with some zero components.},
year = {2011}
}

@book{Marrdryd22,
   author    = {Marron, J. S. and Dryden, I. L.},
   title     = {Object Oriented Data Analysis},
   year      = {2022},
   publisher = {CRC Press/Chapman and Hall},
   address   = {Boca Raton}
}

@article{Findeisenetal09,
author = {Findeisen, P. and Zapatka, M. and Peccerella, T. and
Matzk, H. and Neumaier, M. and Schadendorf, D. and Ugurel, S.},
title = {Serum Amyloid {A} as a Prognostic Marker in Melanoma Identified by Proteomic Profiling},
journal = {Journal of Clinical Oncology},
volume = {27},
year = {2009},
pages = {2199-2208},
}

@book{Jammetal01,
author = {Jammalamadaka, S. Rao and SenGupta, A.},
year = {2001},
title = {Topics in Circular Statistics, Section 7.5},
publisher = {World Scientific Press},
address = {Singapore}
}

@article {Jungetal12,
    AUTHOR = {Jung, Sungkyu and Dryden, Ian L. and Marron, J. S.},
     TITLE = {Analysis of principal nested spheres},
   JOURNAL = {Biometrika},
  FJOURNAL = {Biometrika},
    VOLUME = {99},
      YEAR = {2012},
    NUMBER = {3},
     PAGES = {551--568},
      ISSN = {0006-3444},
     CODEN = {BIOKAX},
   MRCLASS = {62H11 (62H05 62H25)},
  MRNUMBER = {2966769},
MRREVIEWER = {Wojciech Zieli{\'n}ski},
       DOI = {10.1093/biomet/ass022},
       URL = {http://dx.doi.org/10.1093/biomet/ass022},
}

@article{Hoadley2014TCGA-PanCan,
  title={Multiplatform analysis of 12 cancer types reveals molecular classification within and across tissues of origin},
  author={Hoadley, Katherine A and Yau, Christina and Wolf, Denise M and Cherniack, Andrew D and Tamborero, David and Ng, Sam and Leiserson, Max DM and Niu, Beifang and McLellan, Michael D and Uzunangelov, Vladislav and others},
  journal={Cell},
  volume={158},
  number={4},
  pages={929--944},
  year={2014},
  publisher={Elsevier}
}

@Manual{R-package,
    title = {R: A Language and Environment for Statistical Computing},
    author = {{R Core Team}},
    organization = {R Foundation for Statistical Computing},
    address = {Vienna, Austria},
    year = {2025},
    url = {https://www.R-project.org/},
  }

@article{Gabriel71,
 ISSN = {00063444, 14643510},
 URL = {http://www.jstor.org/stable/2334381},
 abstract = {Any matrix of rank two can be displayed as a biplot which consists of a vector for each row and a vector for each column, chosen so that any element of the matrix is exactly the inner product of the vectors corresponding to its row and to its column. If a matrix is of higher rank, one may display it approximately by a biplot of a matrix of rank two which approximates the original matrix. The biplot provides a useful tool of data analysis and allows the visual appraisal of the structure of large data matrices. It is especially revealing in principal component analysis, where the biplot can show inter-unit distances and indicate clustering of units as well as display variances and correlations of the variables.},
 author = {K. R. Gabriel},
 journal = {Biometrika},
 number = {3},
 pages = {453--467},
 publisher = {[Oxford University Press, Biometrika Trust]},
 title = {The Biplot Graphic Display of Matrices with Application to Principal Component Analysis},
 urldate = {2025-03-04},
 volume = {58},
 year = {1971}
}

@book{Conover1971,
  title={Practical Nonparametric Statistics},
  author={Conover, W.J.},
  isbn={9780471168515},
  lccn={lc74126223},
  url={https://books.google.co.uk/books?id=NV4YAAAAIAAJ},
  year={1971},
  publisher={Wiley}
}

@inproceedings{Marronetal10,
author = {Marron, J. S. and Jung, S. and Dryden, I. L.}, 
year = {2010}, 
title = {Speculation on the Generality of the Backward Stepwise
View of {PCA}},
booktitle = {Proceedings of MIR 2010: 11th ACM SIGMM International Conference on Multimedia
Information Retrieval}, 
publisher = {Association for Computing Machinery, Inc.}, 
address = {Danvers, MA.}, 
pages = {227--230},
}

@article{Drydenetal19,
author = {Dryden, I. L. and Kim, K. and Laughton, C. A. and Le, H.},
year = {2019},
title = {Principal nested shape space analysis of molecular dynamics data},
journal = {Annals of Applied Statistics}, 
volume = {13}, 
pages = {2213--2234},
doi = {https://doi.org/10.1214/19-AOAS1277},
}

@article {Browetal10,
    AUTHOR = {Browne, William J. and Dryden, Ian L. and Handley, Kelly and
              Mian, Shahid and Schadendorf, Dirk},
     TITLE = {Mixed effect modelling of proteomic mass spectrometry data by
              using {G}aussian mixtures},
   JOURNAL = {J. R. Stat. Soc. Ser. C. Appl. Stat.},
  FJOURNAL = {Journal of the Royal Statistical Society. Series C. Applied
              Statistics},
    VOLUME = {59},
      YEAR = {2010},
    NUMBER = {4},
     PAGES = {617--633},
      ISSN = {0035-9254},
   MRCLASS = {Database Expansion Item},
  MRNUMBER = {2758626},
       DOI = {10.1111/j.1467-9876.2009.00706.x},
       URL = {http://dx.doi.org/10.1111/j.1467-9876.2009.00706.x},
}

@article{Zoubouloglou2023_STPCA,
  title = {Scaled Torus Principal Component Analysis},
  author = {Pavlos Zoubouloglou and Eduardo Garc\'ia-Portugués and J. S. Marron},
  journal = {Journal of Computational and Graphical Statistics},
  year = {2023},
  volume = {32},
  number = {3},
  pages = {1024--1035},
  doi = {10.1080/10618600.2022.2119985},
  url = {https://www.tandfonline.com/doi/abs/10.1080/10618600.2022.2119985}
}

@article{YangVemuri2020_NestedGrassmannians,
  title        = {Nested {G}rassmannians for Dimensionality Reduction with Applications},
  author       = {Chun-Hao Yang and Baba C. Vemuri},
  journal      = {arXiv preprint arXiv:2010.14589},
  year         = {2020},
  note         = {Preprint; published / presented in IPMI 2021},
  url          = {https://arxiv.org/abs/2010.14589}
}

@article{pennec2018barycentric,
  title={Barycentric subspace analysis on manifolds},
  author={Pennec, Xavier},
  journal={The Annals of Statistics},
  volume={46},
  pages={2711--2746},
  year={2018},
  publisher={Institute of Mathematical Statistics},
  doi={10.1214/17-AOS1636}
}

@misc{DrydenGithub,
  author = {Dryden, I. L.},
  Year = {2025}, 
  title = {{Github development version of the shapes R package}},
  howpublished = {\url{https://github.com/iandryden/shapes }},
}

@book{Jolliffe02,
author = {I. T. Jolliffe},
title = {Principal Component Analysis, 2nd edition},
publisher = {Springer}, 
address = {New York}, 
year = {2002},
}

@book{Drydmard16,
   author    = {Dryden, I. L. and Mardia, K. V.},
   title     = {Statistical Shape Analysis, with Applications in {R}, 2nd edition},
   year      = {2016},
   publisher = {Wiley},
   address   = {Chichester}
}

@article{Mianetal05,
author = {Mian, S. and Ugurel, S. and Parkinson, E. and Schlenzka, I. and
Dryden, I. and Lancashire, L. and Ball, G. and Creaser, C. and Rees, R. and Schadendorf, D.},
year = {2005},
title = {Serum proteomic fingerprinting discriminates between clinical stages and predicts disease progression in melanoma patients},
journal = {Journal of Clinical Oncology},
volume = {23},
number = {22},
pages = {5088--5093},
}
\end{document}